\newcommand{\cmark}{\ding{51}}%
\newcommand{\xmark}{\ding{55}}%
\newtheorem{thm}{Theorem}[section]
\newtheorem{defn}{Definition}[section]
\newtheorem{lemma}{Lemma}[section]
\newtheorem{rem}{Remark}[section]
\title{
Stability of Fractional-Order Discrete-Time Systems with Application to Rulkov Neural Networks and  Asymmetric Memristor Synapses
}
\author[1]{Leila Eftekhari}
\author[2]{Moein Khalighi}
\author[1]{Saeid Abbasbandy}
\affil[1]{Department of Applied Mathematics, Faculty of Science, Imam Khomeini International University,  Qazvin, Iran}
\affil[2]{Department of Computing, University of Turku; 20500 Turku, Finland}
\begin{document}
\maketitle

\begin{abstract}
Memristors have emerged as ideal components for modeling synaptic connections in neural networks due to their ability to emulate synaptic plasticity and memory effects. Discrete models of memristor-coupled neurons are crucial for simplifying computations and efficiently analyzing large-scale neural networks. Furthermore, incorporating fractional-order calculus into discrete models enhances their capacity to capture the memory and hereditary properties inherent in biological neurons, thus reducing numerical discretization errors compared to integer-order models. Despite this potential, discrete fractional-order neural models coupled through memristors have received limited attention.
To address this gap, we introduce two novel discrete fractional-order neural systems. The first system consists of two Rulkov neurons coupled via dual memristors to emulate synaptic functions. The second system expands this configuration into a ring-shaped network of neurons consisting of multiple similar subnetworks. We present a novel theorem that defines stability regions for discrete fractional-order systems, applicable to both proposed models. Integrating discrete fractional-order calculus into memristor-coupled neural models provides a foundation for more accurate and efficient simulations of neural dynamics. This work advances the understanding of neural network stability and paves the way for future research into efficient neural computations.

\end{abstract}
\textbf{Keyword:}
Discrete fractional calculus, Stability analysis, Memristors, Rulkov map, Neural networks

\section{Introduction}

Artificial neural networks and mathematical models simulate the brain's complex activities, offering deeper insights into neural mechanisms~\cite{li2022synchronization}. Neurons do not function in isolation; their interactions are essential for neural processing. Therefore, modeling these neuronal connections using physical devices is critical in network modeling. Due to their nonlinear behavior, nanoscale dimensions, and memory characteristics, memristors are regarded as artificial synapses widely utilized in chaotic circuits and neural networks~\cite{yan2024dynamics}. Memristors demonstrate effective synaptic plasticity, making them highly suitable as emulators of real synapses~\cite{ma2023hidden}. Furthermore, coupling discrete dual memristors instead of a single memristor in neural maps enhances the exploration of complex neuronal behaviors~\cite{ma2023hidden}.

Neuronal models are typically categorized into continuous-time and discrete-time models~\cite{vivekanandhan2023dynamic}. Many well-known models that describe neuronal dynamics are based on ordinary differential equations, such as the Morris-Lecar and FitzHugh-Nagumo models~\cite{yan2024dynamics}. Alternatively, several neuron models are formulated as discrete-time dynamical systems, including the Izhikevich, Rulkov, Chialvo and Courbage-Nekorkin-Vdovin  models~\cite{sun2016complete}. Discrete-time models offer computational advantages and are well-suited for digital hardware implementations, aligning effectively with the integration of memristors as artificial synapses.

While continuous neural models have been extensively studied, discrete models can more effectively simulate random and paroxysmal intracellular ion dynamics~\cite{yan2024dynamics}. They replicate neuronal firing patterns using lower-dimensional systems, which are computationally efficient for large-scale networks and practical in applied sciences. For instance, Peng et al.\cite{peng2020discrete} introduced a discrete memristor and demonstrated that integrating it into the Hénon map enhances performance, yielding a wider range of high complexity suitable for engineering applications such as secure communication. Chen et al.\cite{chen2019flux} developed frameworks for memristor emulators and a fifth-order Chua's circuit with memristive properties, investigating complex and highly sensitive extreme multistability behaviors dependent on initial conditions. Ma et al.~\cite{ma2023multistability} examined the firing patterns and phase synchronization of two Rulkov neurons coupled through a locally active discrete memristor, revealing anti-phase synchronization in similar neurons exhibiting diverse firing dynamics.

Fractional differential equations and differences have recently attracted significant attention. Due to their nonlocal property, fractional derivatives reflect the history of processes, offering enhanced accuracy in modeling real-world phenomena. Fractional calculus has opened new avenues in fields such as electric  systems~\cite{eftekhari2023stability}, biology~\cite{amirian2022extending,khalighi2021three}, ecology~\cite{khalighi2022quantifying}, and neural networks~\cite{kaslik2017dynamics,kaslik2011dynamics,almatroud2021extreme}. Furthermore, fractional-order discrete-time systems prevent numerical discretization errors associated with continuous fractional systems and exhibit more complex dynamics~\cite{almatroud2021extreme,vivekanandhan2023dynamic}. In discrete fractional calculus~\cite{jonnalagadda2015analysis}, two primary approaches are the fractional delta difference ($\Delta$) and the fractional nabla difference ($\nabla$). One drawback of the delta approach is the domain shift that occurs when converting a function to its delta fractional difference, an issue less pronounced with the nabla approach~\cite{goodrich2015discrete}.

Applying fractional-order calculus to neural modeling yields superior nonlinear features and richer dynamic behaviors compared to integer-order models. Lu et al.\cite{lu2022dynamics} introduced a fractional-order discrete memristor model and developed a memristor-driven fractional Rulkov neuron map. Their fractional-order model demonstrates complex dynamics, such as asymptotically periodic oscillations, hyperchaos, multistability, and transient chaos, highlighting the significance of fractional calculus in capturing the memory effects inherent in neural systems. Furthermore, modeling neuronal activity using fractional-order dynamics captures history-dependent behaviors, resulting in power-law, non-Markovian dynamics in membrane voltage\cite{weinberg2017history}.

Discrete fractional calculus has effectively modeled real-world problems across diverse fields, including tumor-immune interactions~\cite{alzabut2023existence}, COVID-19 pandemic models~\cite{abbes2023fractional}, and image processing~\cite{ostalczyk2015discrete}. In neural dynamics, discrete fractional models exhibit more complex behaviors than their continuous counterparts. For instance, the discrete fractional Morris–Lecar model transitions from regular spiking to sequences of rapid spikes and inconsistent abundance~\cite{chu2023advanced}.

Despite these advancements, existing studies on discrete memristive neural networks primarily consider individual memristors coupled to discrete neurons, which may not fully capture the complexity of biological neuronal dynamics~\cite{ma2023hidden}. Coupling neurons via dual memristors can produce complex behaviors such as extreme transient dynamics, aligning more closely with real neuronal operations~\cite{du2021coupling}. However, the incorporation of fractional-order discrete calculus, especially using the nabla operator, into dual memristor-coupled neural networks remains underexplored.

To address this gap, we propose a novel exploration of fractional-order discrete memristor-coupled neural networks. Specifically, we construct a Fractional Discrete Memristor-Coupled Rulkov Neuron (FDMCRN) map by coupling two Rulkov neurons through two discrete flux-controlled memristors functioning as synapses. We employ the nabla fractional-order Caputo difference to model the fractional dynamics. We then expand this model into a ring network configuration of FDMCRN maps, establishing theoretical foundations for stability analysis of the ring structure, which is essential for understanding long-term neural network behavior. Finally, we numerically investigate stability criteria for both systems and assess how initial conditions influence the dynamics of the FDMCRN model.

\section{Materials and Methods}

\begin{rem}
In this article, we consistently work within the framework of a discrete time scale,  $ \mathbb{N}_b = \{ b,b + 1,b + 2,...\} $ where $b\in \mathbb{R}$ is fixed. For any function $f:\mathbb{N}_b \to \mathbb{R}$, the backward difference also known as the nabla operator is defined as  $\nabla f(t) = f(t) - f(t - 1)$ for $t \in \mathbb{N}_{b+1}$.
\end{rem}

\begin{defn}\label{def:rising}
\cite{jonnalagadda2016solutions}
 The definition of the $\tau $ rising function is given as follows
\begin{align}
t\overline {^\tau }  = \frac{{\Gamma (t + \tau )}}{{\Gamma (t)}}, \quad {0^{\overline \tau  }} = 0,
\end{align}
 for all $ \tau, t \in \mathbb{R}$.
\end{defn}

\begin{defn} \cite{mohan2014stability}
\label{def:cap}
(Nabla fractional sum)
Suppose $f:\mathbb{N}_0 \to \mathbb{R} $ and $\alpha \in (0,1)$ be provided, the nabla fractional sum of order  ${\alpha }$ for the function $f$ is expressed as follows
\begin{align}
\label{f1}
{\nabla }^{ - \alpha }f(x) = \frac{1}{{\Gamma (\alpha )}}\sum\limits_{s = 1}^x {{{(x - \rho (s))}^{\overline {\alpha  - 1} }},}
 \end{align}
where $\rho (s) = s - 1$,   $x\in \mathbb{N}_0$.
 \end{defn}

\begin{defn} \cite{jonnalagadda2016solutions}
\label{def:cap2}
Let $f:\mathbb{N}_0 \to \mathbb{ R} $ and $\alpha \in (0,1)$, the  Caputo-type fractional nabla difference of order ${\alpha }$ for $f$ is
\begin{align}
\label{f1b}
{}_0^C\nabla _{}^\alpha f(x): = ({}_0 \nabla ^{ - (1 - \alpha )}\nabla f)(x) = \frac{1}{{\Gamma (1 - \alpha )}}\sum\limits_{s = a + 1}^x {{{(x - s + 1)}^{\overline { - \alpha } }}(\nabla f)(s),}
 \end{align}
  for $\alpha  = 0$, we get ~$\nabla _0^0f(x) = f(x)$  and  $x\in \mathbb{N}_1$.
 \end{defn}

\subsection{Mathematical  Representations}

\textbf{\textsc{FDMCRN} Framework}.
Using the definition (\ref{f1}) and the mathematical expression of the discrete-time memristor \cite{liu2022dynamics, li2022synchronization,yan2024dynamics}, the difference equation of the Rulkov neuron map can be derived the following form

\begin{align}\label{eq:fmhnn_2_N}
\begin{cases}
{}_0^C{\nabla ^\alpha }{x_{1,n + 1}} = \frac{a}{{1 + {x^2}{{_{1,n}}^{}}}} + {y_{1,n}} + k\varphi_{1,n} ({x_{1,n}} - {x_{2,n}}) - {x_{1,n}},\\
{}_0^C{\nabla ^\alpha }{y_{1,n + 1}} =  - \eta  ( {x_{1,n}} - \sigma ) + k\varphi_{2,n} ({y_{1,n}} - {y_{2,n}}),\\
{}_0^C{\nabla ^\alpha }{x_{2,n + 1}} = \frac{a}{{1 + {x^2}{{_{2,n}}^{}}}} + {y_{2,n}} - k\varphi_{1,n} ({x_{1,n}} - {x_{2,n}}) - {x_{2,n}},\\
{}_0^C{\nabla ^\alpha }{y_{2,n + 1}} =  - \eta  ( {x_{2,n}} - \sigma ) - k\varphi_{2,n} ({y_{1,n}} - {y_{2,n}}),\\
{}_0^C{\nabla ^\alpha }{\varphi _{1,n + 1}} = {x_{1,n}} - {x_{2,n}},\\
{}_0^C{\nabla ^\alpha }{\varphi _{2,n + 1}} = {y_{1,n}} - {y_{2,n}},\\
\end{cases}
\end{align}
where $n$ denotes the iterations and $k$ is the coupling strength between the pair of neurons. The variables $x_{1,n}, x_{2,n}$  represent the neurons' membrane potentials influenced by a nonlinear map parameter $a$. The recovery variables $y_{1,n}, y_{2,n}$ are influenced by parameters $ 0 < \eta  <<  1$ and $\sigma $, which is an external DC to neurons. Additionally, $\varphi _{1,n}, \varphi _{2,n }$ are flux variables of the neurons at the $n$-th iteration. The schematic connection pattern of the \textsc{FDMCRN}, consisting of two neurons with two coupling memristors, is depicted in Fig.~\ref{fig1}.
\begin{figure}[!tbp]
\begin{center}
\includegraphics[width=12cm, height=6cm]{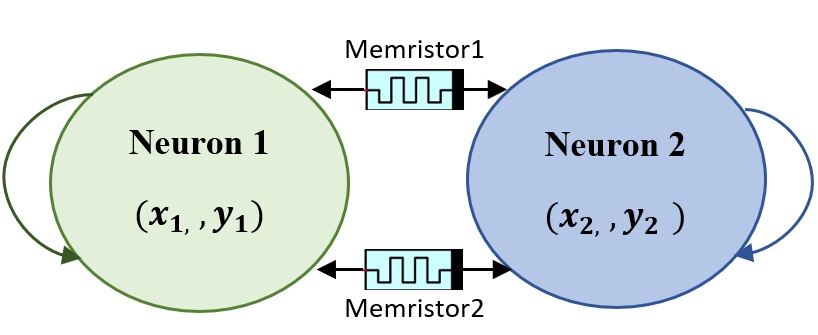}
\end{center}
\caption{Connection pattern of the \textsc{FDMCRN} model.}\label{fig1}
\end{figure}
\\
\\
\textbf{\textsc{ FDMCRN } Framework with Ring  Configuration}.
We enhance the  \textsc{FDMCRN} model (\ref{eq:fmhnn_2_N}) by increasing the number of neurons from two to $N$ and implementing a ring-network structure, as illustrated in Fig.~\ref{fig2}. In this structure, $N$ units are connected through coupling, with the membrane potentials $x_{1i}$ evolving according to the following equations
\begin{align}\label{ndim}
\begin{cases}
{}_0^C{\nabla ^\alpha }{x_{1i,n + 1}} = \frac{a}{{1 + {x^2}{{_{1i,n}}^{}}}} + {y_{1i,n}} + k\varphi_{1i,n} ({x_{1i,n}} - {x_{2i,n}}) - {x_{1i,n}}+ \frac{D}{{2P}}\sum\limits_{j = i - P}^{i + P} {({x_{1j}} - {x_{1i}})},\\
{}_0^C{\nabla ^\alpha }{y_{1i,n + 1}} =  - \eta ( {x_{1i,n}} - \sigma ) + k\varphi_{2i,n} ({y_{1i,n}} - {y_{2i,n}}),\\
{}_0^C{\nabla ^\alpha }{x_{2i,n + 1}} = \frac{a}{{1 + {x^2}{{_{2i,n}}^{}}}} + {y_{2i,n}} - k\varphi _{1i,n}({x_{1i,n}} - {x_{2i,n}}) - {x_{2i,n}},\\
{}_0^C{\nabla ^\alpha }{y_{2i,n + 1}} =  - \eta ( {x_{2i,n}} - \sigma ) - k\varphi_{2i,n} ({y_{1i,n}} - {y_{2i,n}}),\\
{}_0^C{\nabla ^\alpha }{\varphi _{1i,n + 1}} = {x_{1i,n}} - {x_{2i,n}},\\
{}_0^C{\nabla ^\alpha }{\varphi _{2i,n + 1}} = {y_{1i,n}} - {y_{2i,n}},\\
\end{cases}
\end{align}
where $i = 1$ to $N$ and the state variables of the $i^{\text{th}}$ unit are represented as $(x_{1i}, y_{1i}, x_{2i}, y_{2i}, \varphi_{1i}, \varphi_{2i})$.
\\The memristor synapses couple symmetrically to the $2P$ nearest neighbors, every unit engages with $P$ neighboring units to its left and right. Here, $P$ denotes the set of indices corresponding to the neighbors of the $i^{\text{th}}$ unit and $D$ represents the connection strength between  the $N$ units.
\begin{figure}[!tbp]
	\begin{center}
\includegraphics[width=15cm, height=9cm]{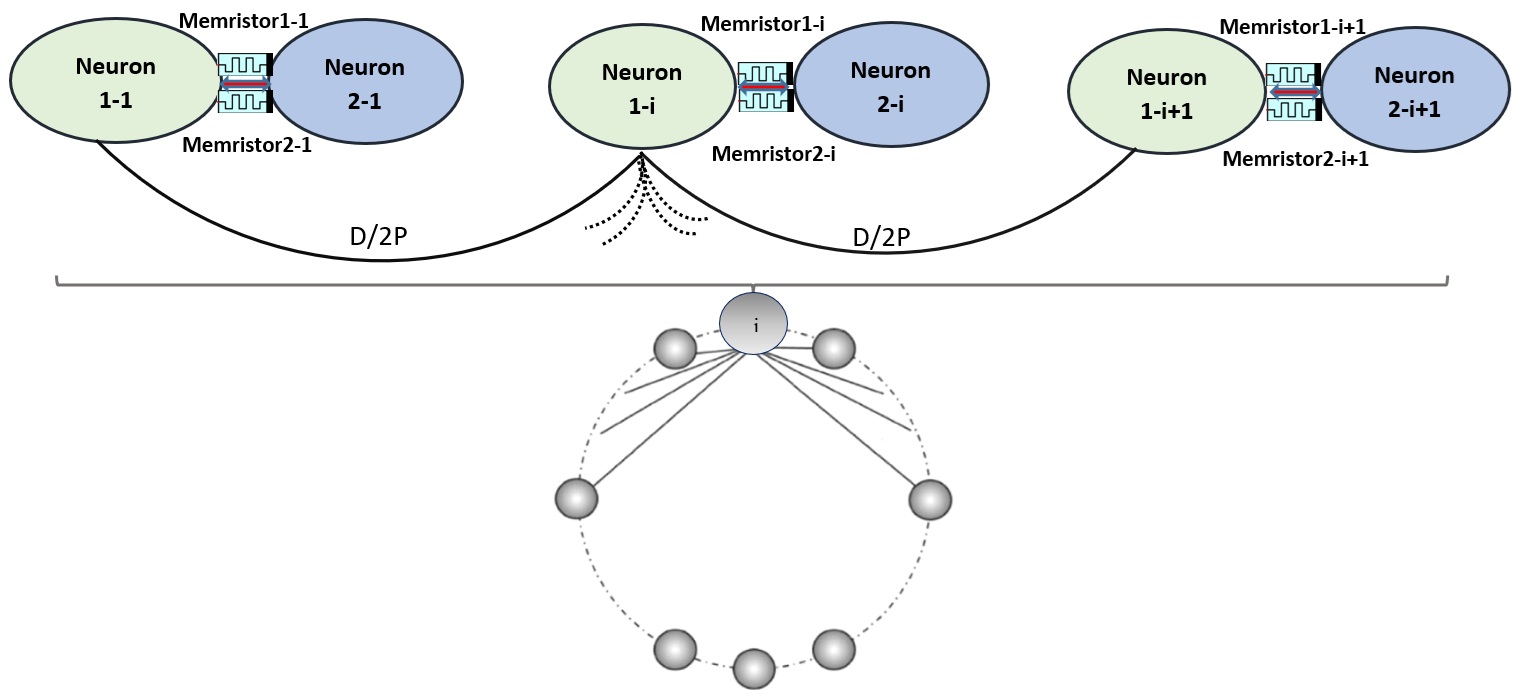}
\end{center}
\caption{The i-th unit of the ring network including the \textsc{FDMCRN} model.}\label{fig2}
\end{figure}

\subsection{Analysis of the Model}
This section explores the existence and uniqueness of solutions concerning stability conditions for models \eqref{eq:fmhnn_2_N} and \eqref{ndim}. Each subsection begins with necessary definitions, lemmas, and theorems.
\subsubsection{Analysis of Possible Outcomes }
\begin{defn}
A \textbf{circulant matrix} is a structured square matrix where each row is defined as
\begin{equation*}
C = circ (b_0, b_1, . . . , b_{N-1})=
\left[
{\begin{array}{*{20}{c}}
c_{0}    & c_{1} & c_{2} & c_{3} & \ddots & c_{N-2} & c_{N-1}\\
c_{N-1} & c_{0} & c_{1} & c_{2} & \ddots & c_{N-3}& c_{N-2}\\
c_{N-2} & c_{0} & c_{1} & c_{2} & \ddots & c_{N-4} & c_{N-3}\\
\ddots & \ddots & \ddots & \ddots &  \ddots & \ddots & \ddots \\
c_{2}     & c_{3} & c_{4} & c_{5} & \ddots & c_{0}  & c_{1} \\
c_{1}    & c_{2}  & c_{3} & c_{4} & \ddots & c_{N-1} & c_{0}
\end{array}} \right]_{N \times N}.
\end{equation*}
A \textbf{block circulant matrix} extends the concept of a circulant matrix by having each block as an individual circulant matrix, shown below
\begin{equation*}
C = bcirc (C_0, C_1, . . . , C_{N-1})=
\left[
{\begin{array}{*{20}{c}}
C_{0}    & C_{1} & C_{2} & C_{3} & \ddots & C_{N-2} & C_{N-1}\\
C_{N-1} & C_{0} & C_{1} & C_{2} & \ddots & C_{N-3}& C_{N-2}\\
C_{N-2} & C_{0} & C_{1} & C_{2} & \ddots & C_{N-4} & C_{N-3}\\
\ddots & \ddots & \ddots & \ddots &  \ddots &\ddots & \ddots \\
C_{2}     & C_{3} & C_{4} & C_{5} & \ddots & C_{0}  & C_{1} \\
C_{1}    & C_{2}  & C_{3} & C_{4} & \ddots & C_{N-1} & C_{0}
\end{array}} \right]_{N \times N},
\end{equation*}
the $C_i$ matrices for  $i=0$ to $N-1$ are block-formatted matrices \cite{eftekhari2023stability}.
\end{defn}

\begin{thm}\label{thr1}
Suppose that \\
$\Omega=\{(x_1, y_1,x_2,y_2,\varphi_1,\varphi_2) \in \mathop{R}^6$, $ \max\{||x_i||,||y_i||, ||\varphi_i||\} \leq \zeta \}$, i=1,2 and $\textsc{S}=\Omega \times \mathbb{ N}_1$ where $n<\infty$. For any initial conditions $\textsc{X}_0=(x_{1,0}, y_{1,0}, x_{2,0}, y_{2,0}, \varphi_{1,0}, \varphi_{2,0}) \in \Omega$, for every n, all  the solutions $\textsc{X}_n \in\textsc{S} $ of system (\ref{eq:fmhnn_2_N}) are  unique.
\end{thm}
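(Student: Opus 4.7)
The plan is to recast the system (\ref{eq:fmhnn_2_N}) as an equivalent Volterra-type summation equation and then run a contraction-mapping argument on a suitable Banach space of $\Omega$-valued sequences. First I would write the six scalar equations in the compact vector form ${}_0^C\nabla^\alpha X_{n+1} = F(X_n)$, where $X_n = (x_{1,n}, y_{1,n}, x_{2,n}, y_{2,n}, \varphi_{1,n}, \varphi_{2,n})$ and $F:\Omega\to\mathbb{R}^6$ collects the six right-hand sides. Inverting the Caputo nabla difference in Definition~\ref{def:cap2} via the nabla fractional sum in Definition~\ref{def:cap} yields the equivalent representation
\begin{equation*}
X_n \;=\; X_0 \;+\; \frac{1}{\Gamma(\alpha)}\sum_{s=1}^{n}(n-\rho(s))^{\overline{\alpha-1}}\, F(X_{s-1}),
\end{equation*}
and I define the operator $T$ on $\textsc{S}$ by letting $(TX)_n$ be the right-hand side above.

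The second step is to establish that $F$ is Lipschitz on the bounded set $\Omega$. Each component of $F$ is either linear, of the form $a/(1+x^2)$ whose derivative $-2ax/(1+x^2)^2$ is globally bounded, or a trilinear product such as $k\varphi_1(x_1-x_2)$ whose partial derivatives are bounded in absolute value by $2k\zeta$ and $k\zeta$ on $\Omega$. Combining these contributions produces a single Lipschitz constant $L=L(a,k,\eta,\zeta)$. In the same step I would estimate $\|F(X)\|$ on $\Omega$ to ensure $T$ maps $\textsc{S}$ into itself; this is where the boundedness parameter $\zeta$ in the definition of $\Omega$ is first used in a nontrivial way, and it may require $\zeta$ to be chosen large relative to the parameters $a$, $\sigma$, and $\eta$.

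The third step is the contraction estimate. Using the Lipschitz bound on $F$ together with the identity $\sum_{s=1}^{n}(n-\rho(s))^{\overline{\alpha-1}} = n^{\overline{\alpha}}/\alpha$ from the calculus of rising functions (Definition~\ref{def:rising}), I obtain
\begin{equation*}
\|TX - TY\|_n \;\le\; \frac{L\,n^{\overline{\alpha}}}{\Gamma(\alpha+1)}\,\sup_{1\le s\le n}\|X_{s-1}-Y_{s-1}\|.
\end{equation*}
To turn this into a strict contraction on the whole horizon, I would equip the sequence space with a Bielecki-type weighted sup-norm $\|X\|_\lambda = \sup_n \lambda^{-n}\|X_n\|$, or, since the statement restricts to $n<\infty$, simply fix the finite time horizon and choose $\lambda$ (or equivalently the horizon) so that the prefactor is strictly less than one. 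The Banach fixed-point theorem then delivers a unique fixed point $X^\ast\in\textsc{S}$, which by the equivalence established in step one is the unique solution of (\ref{eq:fmhnn_2_N}) with initial condition $X_0\in\Omega$.

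The main obstacle is controlling the discrete memory kernel $(n-\rho(s))^{\overline{\alpha-1}}$: unlike the classical first-order case its tail is not summable, so a naive sup-norm will not give a contraction for arbitrarily large $n$. The hypothesis $n<\infty$ in the statement is precisely what lets this difficulty be bypassed, either by a Bielecki weight or a horizon-by-horizon extension argument; verifying that each application of $T$ preserves the bound $\max\{\|x_i\|,\|y_i\|,\|\varphi_i\|\}\le\zeta$ is the remaining bookkeeping step, which is straightforward once $\zeta$ is taken large enough relative to $a,k,\eta,\sigma$ and the fractional-sum prefactor $n^{\overline{\alpha}}/\Gamma(\alpha+1)$.
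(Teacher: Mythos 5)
Your proposal is correct and follows essentially the same route as the paper: both proofs reduce the theorem to a Lipschitz estimate for the right-hand side on the bounded set $\Omega$ (using that $a/(1+x^2)$ has bounded derivative and that the memristive coupling terms are bounded by multiples of $\zeta$ there), and then conclude uniqueness by a fixed-point argument. The only difference is that the paper imports the contraction/Picard step wholesale from the cited reference \cite{mert2019existence}, whereas you carry it out explicitly via the equivalent nabla Volterra summation equation and a Bielecki-weighted norm, which makes your version more self-contained but not substantively different.
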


\begin{proof}
Assume $\mathbb{F} (\textsc{X}_n)=(\mathbb{F}_1(\textsc{X}_n), \mathbb{F}_2(\textsc{X}_n), \mathbb{F}_3(\textsc{X}_n),\mathbb{F}_4(\textsc{X}_n),\mathbb{F}_5(\textsc{X}_n),\mathbb{F}_6(\textsc{X}_n))^T$ represents a mapping function with the $||.||$ norm, ensuring that
\begin{align}
    \mathbb{F}_1(\textsc{X}_n)&=\frac{a}{{1 + x_{1,n}^2}} + {y_{1,n}} + k\varphi_{1,n} ({x_{1,n}} - {x_{2,n}}) - {x_{1,n}},\nonumber\\
    \mathbb{F}_2(\textsc{X}_n)&= - \eta (1 + {x_{1,n}} - \sigma ) + k\varphi_{2,n} ({y_{1,n}} - {y_{2,n}}),\nonumber\\
    \mathbb{F}_3(\textsc{X}_n)&=\frac{a}{{1 + x_{2,n}^2}} + {y_{2,n}} - k\varphi_{1,n} ({x_{1,n}} - {x_{2,n}}) - {x_{2,n}},\nonumber\\
    \mathbb{F}_4(\textsc{X}_n)&=  - \eta (1 + {x_{2,n}} - \sigma ) - k\varphi_{2,n} ({y_{1,n}} - {y_{2,n}}),\nonumber\\
    \mathbb{F}_5(\textsc{X}_n)&= {x_{1,n}} - {x_{2,n}},\nonumber\\
    \mathbb{F}_6(\textsc{X}_n)&={y_{1,n}} - {y_{2,n}},  \nonumber
\end{align}
where $\textsc{X}_n=(x_{1,n},y_{1,n}, x_{2,n},y_{2,n}, \varphi_{1,n},\varphi_{2,n})^T$. Consequently, the mapping function can be expressed in the form below 
$$\mathbb{F}(\textsc{X}_n)= \Psi (\textsc{X}_n) + A \textsc{\textsc{X}}_n + k \Phi \textsc{X}_n ,$$ such that
\begin{align*}
\Psi (X_n)=
\left[\begin{array}{*{20}{c}}
{\frac{a}{{1 + x_{1,n}^2}}}\\
{\eta \sigma }\\
{\frac{a}{{1 + x_{2,n}^2}}}\\
{\eta \sigma }\\
0\\
0
\end{array}\right],
 A=
\left[\begin{array}{*{20}{c}}
{ - 1}&1&0&0&0&0\\
{ - \eta }&0&0&0&0&0\\
0&0&{ - 1}&1&0&0\\
0&0&{ - \eta }&0&0&0\\
1&0&{ - 1}&0&0&0\\
0&1&0&{ - 1}&0&0
\end{array}\right], ~~
\\
\Phi=
\left[\begin{array}{*{20}{c}}
{{\varphi _{1,n}}}&0&{ - {\varphi _{1,n}}}&0&0&0\\
0&{{\varphi _{2,n}}}&0&{ - {\varphi _{2,n}}}&0&0\\
{ - {\varphi _{1,n}}}&0&{{\varphi _{1,n}}}&0&0&0\\
0&{ - {\varphi _{2,n}}}&0&{{\varphi _{2,n}}}&0&0\\
0&0&0&0&0&0\\
0&0&0&0&0&0
\end{array}\right],
\textsc{X}_n =
\left[ \begin{array}{*{20}{c}}
{{x_{1,n}}}\\
{{y_{1,n}}}\\
{{x_{2,n}}}\\
{{y_{2,n}}}\\
{{\varphi _{1,n}}}\\
{{\varphi _{2,n}}}
\end{array} \right],
\end{align*}
where $\Psi(\textsc{x}) $ is continuous. Therefore, based on  the Lipschitz condition \cite{mert2019existence}, for all  $\textsc{X}, \bar{\textsc{X}}$ in $\textsc{S}$, there exists a constant $\zeta_1 \geq 0$, s.t $  || (\Psi(\textsc{x})-\Psi(\bar{\textsc{x}})|| \leq  \zeta _1 ||\textsc{x}-\bar{\textsc{x}}||$.\\
It will now be shown that the system (\ref{eq:fmhnn_2_N}) meets  Lipschitz condition \cite{mert2019existence}, s.t
$    ||\mathbb{F}(\textsc{X})-\mathbb{F}(\bar{\textsc{X}})||  \leq  L||\textsc{X}-\bar{\textsc{X}}|| \nonumber $ Where ${t_1},{t_2} \in {N_1},{t_1} <  t_2$, $L=\left( \zeta _1+ ||A|| + k~{\zeta} \right)>0$. Thus, the solution to the  model (\ref{eq:fmhnn_2_N}) is unique within $\textsc{S}$.
\end{proof}

\begin{thm}\label{th:uniq_ND}
Suppose that ~ $\Omega=\{(\textsc{x}_1, \textsc{y}_1,\textsc{x}_2,\textsc{y}_2, \phi_1, \phi_2) \in \mathbb{R}^{N \times 6}, ~\text{s.t}~~ \textsc{x}_1 = [x_{1i,n}]_{{}_{N \times 1}},~ \textsc{y}_1 = [y_{1i,n}]_{{}_{N \times 1}},
\textsc{x}_2 = [x_{2i,n}]_{{}_{N \times 1}}, \textsc{y}_2 = [y_{2i,n}]_{{}_{N \times 1}}, ~ \phi_1 = [\varphi_{1i,n}]_{{}_{N \times 1}}, ~ \phi_2 = [\varphi_{2i,n}]_{{}_{N \times 1}}, ~ \Lambda = [\lambda_i]_{{}_{N \times 1}}, ~
\max\{||\textsc{x}_1||, ||\textsc{y}_1||, ||\textsc{x}_2||, ||\textsc{y}_2||, ||\phi_1||, ||\phi_2||\} \leq \Lambda\}_{|_{i=1,2, ..., N}}$ and  $S=\Omega \times \mathbb{ N}_1$. For any chosen initial conditions $\left(\textsc{x}_{1,0}, \textsc{y}_{1,0}, \textsc{x}_{2,0},\textsc{y}_{2,0}, \phi_{1,0},\phi_{2,0}\right) \in \Omega$, all the solutions $\textsc{X}_n \in S$ of model (\ref{ndim}) are  unique for all $\textsc{t} \geq 0$.
\end{thm}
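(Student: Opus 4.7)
The plan is to mirror the proof of Theorem \ref{thr1} in the $6N$-dimensional state space of the ring network. I would stack all unit states into a single vector $\textsc{X}_n=(\textsc{x}_1^{T},\textsc{y}_1^{T},\textsc{x}_2^{T},\textsc{y}_2^{T},\phi_1^{T},\phi_2^{T})^{T}\in\mathbb{R}^{6N}$ and rewrite the right-hand side of \eqref{ndim} as
\[
\mathbb{F}(\textsc{X}_n) \;=\; \Psi(\textsc{X}_n) \;+\; \mathcal{A}\,\textsc{X}_n \;+\; k\,\Phi_n\,\textsc{X}_n \;+\; D\,\mathcal{L}\,\textsc{X}_n,
\]
where $\Psi$ collects the nonlinear self-terms $a/(1+x_{\ell i,n}^{2})$ together with the constants $\eta\sigma$; $\mathcal{A}$ and $\Phi_n$ are the block-diagonal $6N\times 6N$ extensions of the matrices $A$ and $\Phi$ from the proof of Theorem \ref{thr1} (one copy per ring unit, with $\Phi_n$ depending on the flux state); and $\mathcal{L}$ is the $6N\times 6N$ matrix encoding the diffusive neighbour sum $\tfrac{1}{2P}\sum_{j=i-P}^{i+P}(x_{1j}-x_{1i})$, which acts nontrivially only on the $\textsc{x}_1$-slot.

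I would then verify Lipschitz continuity of $\mathbb{F}$ on $S$ piece by piece. The nonlinear part $\Psi$ is handled exactly as in Theorem \ref{thr1}: each scalar map $u\mapsto a/(1+u^{2})$ is $C^{1}$ with derivative bounded on the compact set $\Omega$, so $\Psi$ satisfies $\|\Psi(\textsc{X})-\Psi(\bar{\textsc{X}})\|\le\zeta_{1}\|\textsc{X}-\bar{\textsc{X}}\|$ for some $\zeta_{1}\ge 0$. The terms $\mathcal{A}\textsc{X}_n$ and $D\mathcal{L}\textsc{X}_n$ are linear and hence Lipschitz with constants $\|\mathcal{A}\|=\|A\|$ and $D\|\mathcal{L}\|$. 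The bilinear flux term $k\,\Phi_n\,\textsc{X}_n$ is controlled on $\Omega$ exactly as in Theorem \ref{thr1}, since each $|\varphi_{\ell i,n}|$ is bounded by the uniform flux bound coming from $\Lambda$; its Lipschitz constant is proportional to $k$ and that bound, which I denote $\zeta_{2}$. Summing the four contributions yields a single constant
\[
L \;=\; \zeta_{1} \;+\; \|\mathcal{A}\| \;+\; k\,\zeta_{2} \;+\; D\,\|\mathcal{L}\|,
\]
valid for every pair $\textsc{X},\bar{\textsc{X}}\in S$. Invoking the same nabla Caputo existence-uniqueness result used in Theorem \ref{thr1} (reference \cite{mert2019existence}) then delivers a unique solution $\textsc{X}_n\in S$ for any initial datum in $\Omega$ and every $n\in\mathbb{N}_{1}$.

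The main obstacle I foresee is the bookkeeping around the coupling matrix $\mathcal{L}$. Unlike the two-neuron case, where every coupling lived inside a single six-dimensional block, the neighbour sum links units across the network and only enters the $\textsc{x}_1$-component, so $\mathcal{L}$ is a sparse $6N\times 6N$ matrix whose nontrivial subblock is a symmetric circulant graph Laplacian---fitting the block-circulant framework introduced in the preceding subsection---and whose other subblocks vanish. Casting $\mathcal{L}$ in that block-circulant form and extracting a state-independent, $P$-independent bound on $\|\mathcal{L}\|$, for instance via the row-sum estimate $\|\mathcal{L}\|_{\infty}\le 2$ (each active row has $2P$ off-diagonal entries of magnitude $\tfrac{1}{2P}$ and a single diagonal entry of magnitude $1$ with zero row sum), is the only genuinely new estimate. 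Once that bound is in hand, the rest of the argument is a direct transcription of the Lipschitz-plus-uniqueness reasoning from Theorem \ref{thr1} to the $6N$-dimensional ring setting.
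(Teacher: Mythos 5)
Your proposal follows essentially the same route as the paper: decompose the right-hand side into a nonlinear part $\Psi$, a linear part, and a bilinear flux term $k\widetilde\Phi\textsc{X}$, establish a Lipschitz bound on the bounded set $\Omega$ using the flux bound from $\Lambda$, and invoke the nabla Caputo existence--uniqueness result of \cite{mert2019existence}. The only difference is presentational --- the paper absorbs your $D\mathcal{L}$ coupling term into a single block-circulant matrix $A=bcirc(A_0,A_1,\dots,A_1)$ rather than keeping the graph Laplacian separate --- so your argument is correct and matches the paper's proof.
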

\begin{proof}
Suppose $$\textsc{X}_n=(X_{1,n},Y_{1,n}, X_{2,n},Y_{2,n}, \phi_{1,n},\phi_{2,n})^T,$$  $$\mathbb{F} (\textsc{X}_n)=(\mathbb{F}_1(\textsc{X}_n), \mathbb{F}_2(\textsc{X}_n), \mathbb{F}_3(\textsc{X}_n), \mathbb{F}_4(\textsc{X}_n), \mathbb{F}_5(\textsc{X}_n), \mathbb{F}_6(\textsc{X}_n))^T,$$act as a function that maps multiple variables while utilizing the $||.||$ norm in such a way that
\begin{align}
 \mathbb{F}_1(\textsc{X}_n )&=\frac{a}{{1 + x_{1i,n}^2}} + {y_{1i,n}} + k\varphi_{1i,n} ({x_{1i,n}} - {x_{2i,n}}) - {x_{1i,n}}+\frac{D}{{2P}}\sum\limits_{j = i - P}^{i + P} {({x_{1j,n}} - {x_{1i,n}})},\nonumber\\
    \mathbb{F}_2(\textsc{X}_n )&= - \eta (1 + {x_{1i,n}} - \sigma ) + k\varphi_{2i,n} ({y_{1i,n}} - {y_{2i,n}}),\nonumber\\
    \mathbb{F}_3(\textsc{X}_n )&=\frac{a}{{1 + x_{2i,n}^2}} + {y_{2i,n}} - k\varphi_{1i,n} ({x_{1i,n}} - {x_{2i,n}}) - {x_{2i,n}},\nonumber\\
    \mathbb{F}_4(\textsc{X}_n )&=  - \eta (1 + {x_{2i,n}} - \sigma ) - k\varphi_{2i,n} ({y_{1i,n}} - {y_{2i,n}}),\nonumber\\
    \mathbb{F}_5(\textsc{X}_n )&= {x_{1i,n}} - {x_{2i,n}},\nonumber\\
    \mathbb{F}_6(\textsc{X}_n )&={y_{1i,n}} - {y_{2i,n}}.  \nonumber
\end{align}
It can be written as follows
$$\mathbb{F}(\textsc{X}_n )= \Psi (\textsc{X}_n )+A \textsc{X}_n  + k \widetilde \Phi \textsc{X}_n   ~\widetilde \Phi =bcirc(\Phi ,0,..,0)_{1\times N},$$ $$ \Psi (\textsc{X}_n ) =bcirc(\phi(\textsc{X}_n ) ,\phi(\textsc{X}_n ),..,\phi(\textsc{X}_n ))_{1\times N}.$$
Matrix $A \in {\mathbb{R}^{N \times N}}$ is a block circulant matrix constructed from the blocks $ A_0,A_1,...,A_1 $, such that  $A_0,A_1 \in {\mathbb{R}^{6 \times 6}}$,
\begin{align*}
A_0=\left[ \begin{array}{*{20}{c}}
{ - 1 - \frac{D}{P}}&1&0&0&0&0\\
{ - \eta }&0&0&0&0&0\\
0&0&{ - 1}&1&0&0\\
0&0&{ - \eta }&0&0&0\\
1&0&{ - 1}&0&0&0\\
0&1&0&{ - 1}&0&0
\end{array} \right],
A_1=\left[ \begin{array}{*{20}{c}}
{\frac{D}{{2P}}}&0&0&0&0&0\\
0&0&0&0&0&0\\
0&0&0&0&0&0\\
0&0&0&0&0&0\\
0&0&0&0&0&0\\
0&0&0&0&0&0
\end{array} \right],~~~
\end{align*}
\begin{align*}
\Phi=
\left[\begin{array}{*{20}{c}}
{{\varphi _{1,n}}}&0&{ - {\varphi _{1,n}}}&0&0&0\\
0&{{\varphi _{2,n}}}&0&{ - {\varphi _{2,n}}}&0&0\\
{ - {\varphi _{1,n}}}&0&{{\varphi _{1,n}}}&0&0&0\\
0&{ - {\varphi _{2,n}}}&0&{{\varphi _{2,n}}}&0&0\\
0&0&0&0&0&0\\
0&0&0&0&0&0
\end{array}\right], ~~
\textsc{X}_n =
\left[ \begin{array}{*{20}{c}}
{{X_{1,n}}}\\
{{Y_{1,n}}}\\
{{X_{2,n}}}\\
{{Y_{2,n}}}\\
{{\phi _{1,n}}}\\
{{\phi _{2,n}}}
\end{array} \right],
\phi(\textsc{X}_n )=
\left[\begin{array}{*{20}{c}}
{\frac{a}{{1 + x_{1,n}^2}}}\\
{\eta \sigma }\\
{\frac{a}{{1 + x_{2,n}^2}}}\\
{\eta \sigma }\\
0\\
0
\end{array}\right],
\end{align*}
\\
\\
\\
where $\Psi(\textsc{x}) $ is  continuous and consequently satisfies Lipschitz condition \cite{mert2019existence}.\\
The following demonstrates that the system  (\ref{ndim}) satisfies  Lipschitz condition \cite{mert2019existence},
\begin{align*}
   \left\| {F(X) - F(\overline X )} \right\|  \le \ L \left\| {X - \overline X } \right\|,
\end{align*}
where ${t_1},{t_2} \in {N_1},{t_1} < t_2$
and $$L=(||A|| +(2k+1) ~ ||\Lambda||),$$
 $$ \textsc{X}= \left(\textsc{x}_{1,t_1}, \textsc{y}_{1,t_1}, \textsc{x}_{2,t_1},\textsc{y}_{2,t_1}, \phi_{1,t_1},\phi_{2,t_1}\right),$$
$$ \overline  X= \left(\overline x_{1,t_2}, \overline y_{1,t_2}, \overline x_{2,t_2},\overline y_{2,t_2},\overline \phi_{1,t_2},\overline \phi_{2,t_2}\right).$$\\
\\
 This means that  the network model (\ref{ndim}) admits a unique solution  $\textsc{X}_n \in \textsc{S}$.
\end{proof}
\subsubsection{Stability Analysis}
This section provides the theoretical foundation for analyzing stability.

\begin{defn}\label{def:laplace}
\cite{nechvatal2014asymptotics}
The inverse Laplace transform of a function $f:{{\mathbb{N}}} \to \mathbb{C}$ is
\begin{align}
{\aleph_a}[f(t)] = \sum\limits_{k = 1}^\infty  {f({t_k}){{(1 - s)}^{k - 1}}},
\end{align}
where the series converges  for all the points  $s \in  \mathbb{C} $.
\end{defn}

\begin{defn}
\cite{jonnalagadda2018matrix}
The Mittag-Leffler function for matrix arguments is 
$${F_\alpha }(B,{(t - a)^{\overline \alpha  }}) = \sum\limits_{n = 0}^\infty  {\frac{{{{(t - a)}^{\overline {\alpha n} }}}}{{\Gamma (\alpha n + 1)}}{B^n}},$$
where spectral radius of the matrix B is less than 1, $t \in N_a$,  $0 < \alpha  < 1$ and $B \in {R^{n \times n}}$.
\end{defn}

\begin{defn}
\cite{jonnalagadda2018matrix}
In fractional nabla calculus, the matrix exponential function for a matrix  $B \in {R^{n \times n}}$ is 
$${{{ e  }_{\alpha ,\alpha }}(B,{{(t - a)}^{\overline \alpha  }}) = \sum\limits_{}^{} {{{(t - a + 1)}^{\overline {\alpha (n + 1)} }}{B^n}(I - B)} }, $$
where spectral radius of the matrix B is less than 1, $t \in N_a$ and $0 < \alpha  < 1$.
\end{defn}

\begin{lemma}\label{lemma1a}
\cite {edwards2002boundedness}
Asymptotic stability of the linear system ${\nabla  }x(t) = Bx(t)$  is guaranteed if and only if every root of  the characteristic equation, $p(\lambda)= \lambda^2 - b \lambda + c$  is contained within the unit disk.
 \end{lemma}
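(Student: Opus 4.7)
The plan is to reduce the implicit nabla recursion to an explicit one-step matrix iteration and then invoke the standard spectral criterion for convergence of matrix powers.

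First I would unfold the nabla operator. Since $\nabla x(t) = x(t) - x(t-1)$, the equation $\nabla x(t) = Bx(t)$ rearranges to $(I-B)x(t) = x(t-1)$. Assuming $I-B$ is invertible---which is necessary for the recursion to be well-defined as a forward step, and is in any case implied by asymptotic stability, since otherwise $1$ would be an eigenvalue of $B$ and some trajectory could not decay---I would set $M = (I-B)^{-1}$, obtaining $x(t) = M x(t-1)$, and hence $x(t) = M^{t-t_0} x(t_0)$ by iteration from any initial time $t_0$.

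Next I would appeal to the classical linear-algebra fact that $\|M^n\| \to 0$ as $n \to \infty$, equivalently $x(t) \to 0$ for every initial state, if and only if the spectral radius $\rho(M) < 1$, i.e.\ every eigenvalue of $M$ lies in the open unit disk. Since the lemma deals with a two-dimensional system, the characteristic polynomial of $M$ is quadratic, $p(\lambda) = \lambda^2 - (\mathrm{tr}\,M)\lambda + \det M$, matching the stated form with $b = \mathrm{tr}\,M$ and $c = \det M$. Thus $\rho(M) < 1$ is precisely the condition that both roots of $p$ lie in the open unit disk; it may be made computationally explicit via the Schur--Cohn/Jury inequalities $|c|<1$, $p(1)>0$, $p(-1)>0$.

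The main obstacle is conceptual rather than technical: one must keep track of the fact that $p$ is the characteristic polynomial of the iteration matrix $M=(I-B)^{-1}$, not of $B$ itself, so that the stability region is genuinely the open unit disk in the variable $\lambda$. Under the spectral map $\mu \mapsto 1/(1-\mu)$ this translates into the eigenvalues of $B$ lying outside the closed unit disk centered at $1$, which is the form in which the criterion is typically applied later in the paper. Invertibility of $I-B$ is a small side condition but imposes no real restriction, since its failure already precludes asymptotic stability.
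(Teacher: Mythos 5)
The paper offers no proof of this lemma at all --- it is quoted from Edwards and Ford --- so there is nothing in the text to compare your argument against; what you have written is a correct, self-contained justification of the cited result. Rewriting $\nabla x(t)=x(t)-x(t-1)=Bx(t)$ as the explicit iteration $x(t)=(I-B)^{-1}x(t-1)$ and invoking the spectral-radius criterion $\rho\bigl((I-B)^{-1}\bigr)<1$ is the standard route, and your closing observation that this is equivalent to every eigenvalue $\mu$ of $B$ satisfying $|\mu-1|>1$ is exactly the right consistency check: it is the $\beta\to 1$ limit of the region $S_\beta$ in Lemma~\ref{lemma1}. Two points deserve emphasis. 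First, the lemma as printed never says whose characteristic polynomial $p(\lambda)=\lambda^2-b\lambda+c$ is, and your identification $b=\operatorname{tr}M$, $c=\det M$ with $M=(I-B)^{-1}$ is the only reading under which the unit-disk test is true; applied to the characteristic polynomial of $B$ itself the statement would be false (take $B=0$: both roots are $0$, yet $x(t)=x(t-1)$ is only marginally stable). Making this disambiguation explicit is genuinely useful, since the paper later applies the lemma informally to ``the eigenvalues associated with the system.'' Second, your treatment of the singular case is slightly quick: when $\det(I-B)=0$ the backward recursion $(I-B)x(t)=x(t-1)$ may fail to have a solution, or may have infinitely many, so the cleaner statement is that the system is then not a well-posed dynamical system (hence cannot be asymptotically stable), rather than that ``some trajectory could not decay.'' Neither point affects the validity of your argument.
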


\begin{lemma}\label{lemma1}
\cite {vcermak2021problem}
The linear system  ${}_a^C{\nabla ^\beta }x(t) = Bx(t)$ exhibits asymptotic stability if and only if the set of eigenvalues of matrix B  is confined to
\begin{equation}\label{9}
    {S_\beta } = \{ \arg(w) \in \mathbb{C}:\left| {\arg (\arg(w))} \right| > \frac{{\beta \pi }}{2} ~or~ \left| \arg(w) \right| > {(2\cos (\arg (\arg(w))/\beta ))^\beta }\},
    \end{equation}
where  $x(t)  \in {R^n}$, $B \in R^{n \times n}$ and $0 < \beta  < 1$.
 \end{lemma}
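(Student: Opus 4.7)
\textbf{Proof plan for Lemma \ref{lemma1}.} The plan is to reduce the matrix problem to a scalar one via spectral decomposition and then to characterise the asymptotic behaviour of the scalar Caputo nabla equation through its nabla Laplace transform, following the methodology of \cite{vcermak2021problem}. First, I would apply the nabla Laplace transform $\aleph_a$ (Definition \ref{def:laplace}) to both sides of ${}_a^C \nabla^\beta x(t) = B x(t)$. Using the standard transform formula ${\aleph_a}[{}_a^C \nabla^\beta x](s) = s^\beta X(s) - s^{\beta-1} x(a)$, one obtains the resolvent representation
\begin{equation*}
X(s) = s^{\beta-1}\bigl(s^\beta I - B\bigr)^{-1} x(a),
\end{equation*}
and inverting produces the closed-form solution $x(t) = F_\beta\bigl(B, (t-a)^{\overline{\beta}}\bigr) x(a)$ in terms of the matrix Mittag-Leffler function defined above.

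Second, I would diagonalise (or Jordan-decompose) $B = P J P^{-1}$. Because $F_\beta$ is defined by a power series, we obtain $F_\beta(B,\cdot) = P\, F_\beta(J,\cdot)\, P^{-1}$, and the system decouples into independent scalar equations ${}_a^C \nabla^\beta z(t) = \lambda z(t)$ indexed by the eigenvalues $\lambda$ of $B$ (with polynomial-in-$t$ factors for non-trivial Jordan blocks, which do not affect exponential-type decay). Thus asymptotic stability of the full system is equivalent to asymptotic stability of every scalar subsystem.

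Third, for each scalar equation I would locate the singularities of the transform $s^{\beta-1}/(s^\beta - \lambda)$. After the substitution $s \mapsto 1-z$, the convergence region of the nabla transform becomes the exterior of the unit disc, so decay of $z(t) \to 0$ is equivalent to $s^\beta = \lambda$ having no solutions in the image of the closed unit disc under $z \mapsto 1-z$. Parametrising the boundary of this region as $s = 1 - e^{i\theta}$ and solving $s^\beta = \lambda$ produces, in polar form for $\lambda$, the two describing conditions $|\arg(\lambda)| = \beta\pi/2$ and $|\lambda| = \bigl(2\cos(\arg(\lambda)/\beta)\bigr)^\beta$; eigenvalues strictly outside this curve correspond to decaying scalar solutions, recovering precisely the region $S_\beta$ of the statement.

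The principal obstacle will be the explicit parametrisation of the critical boundary curve and the verification that it separates the complex plane into the correct stable and unstable components. Both the branch behaviour of $s^\beta$ (which forces a careful choice of principal argument) and the transition from the unit-disc criterion of Lemma \ref{lemma1a} to the fractional cosine-type criterion require a delicate contour-deformation argument and a sharp asymptotic expansion of the scalar Mittag-Leffler function along $(t-a)^{\overline{\beta}}$. Once the boundary is identified, sufficiency follows from residue estimates that show power-law decay strictly inside $S_\beta$, and necessity follows by exhibiting a non-decaying oscillatory mode whenever an eigenvalue lies on or outside this boundary.
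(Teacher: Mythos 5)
This lemma is imported verbatim from \cite{vcermak2021problem}; the paper itself gives no proof of it, so there is no internal argument to compare yours against. Your plan is, in substance, the argument of the cited source: pass to the nabla Laplace transform to get the resolvent $s^{\beta-1}(s^\beta I - B)^{-1}x(a)$, decouple through the Jordan form, and identify the critical curve as the image of the unit circle under $z \mapsto (1-z)^{\beta}$, which is exactly where the polar description $|w| = (2\cos(\arg(w)/\beta))^{\beta}$ on $|\arg(w)| \le \beta\pi/2$ comes from. Two caveats. First, dismissing nontrivial Jordan blocks because polynomial factors ``do not affect exponential-type decay'' is too quick in this setting: the decay inside $S_\beta$ is only algebraic, $O(n^{-\beta})$ by Lemma \ref{lemma3}, so you must check that the $\lambda$-derivatives of the scalar discrete Mittag--Leffler function (which is what a Jordan block produces in $F_\beta(J,\cdot)$) still tend to zero for $\lambda$ in the open region; this holds, but it requires the asymptotic expansion rather than a generic remark about exponentials. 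Second, the region as printed in the statement is typographically corrupted ($\arg(w) \in \mathbb{C}$ and $\arg(\arg(w))$ should read $w \in \mathbb{C}$ and $\arg(w)$); your reconstruction silently uses the corrected form, which is the right reading, but the discrepancy is worth flagging.
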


\begin{lemma}\label{lemma2}
\cite {acar2013exponential}
The following statement is valid
\begin{equation}\label{10}
 {e_{\alpha ,\alpha }}({t^{\overline \alpha  }}) = {(t + 1)^{\overline {\alpha  - 1} }}{F_{\alpha ,\alpha }}(\lambda ,{(t + \alpha )^{\overline \alpha}}).
 \end{equation}
\end{lemma}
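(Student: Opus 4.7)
The plan is to prove the identity by expanding both sides as power series in $\lambda$ and matching coefficients term by term, using only the definitions of $F_{\alpha,\alpha}$ and $e_{\alpha,\alpha}$ given above together with a single rising-factorial identity.

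First I would write the right-hand side as
\[
(t+1)^{\overline{\alpha-1}}\,F_{\alpha,\alpha}\!\left(\lambda,(t+\alpha)^{\overline{\alpha}}\right)
= (t+1)^{\overline{\alpha-1}}\sum_{n=0}^{\infty}\frac{(t+\alpha)^{\overline{\alpha n}}}{\Gamma(\alpha n+1)}\,\lambda^{n},
\]
so the prefactor can be pushed into each term of the series. The crucial algebraic tool is the product rule for rising factorials that follows directly from the Gamma-ratio form in Definition \ref{def:rising}:
\[
t^{\overline{\mu+\nu}} \;=\; t^{\overline{\mu}}\,(t+\mu)^{\overline{\nu}},
\]
which, applied with $t\mapsto t+1$, $\mu=\alpha-1$, $\nu=\alpha n$, converts the product $(t+1)^{\overline{\alpha-1}}(t+\alpha)^{\overline{\alpha n}}$ into a single rising factorial of the form $(t+1)^{\overline{\alpha n+\alpha-1}}$. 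This rearranges the right-hand side into a clean power series in $\lambda$ whose $n$-th coefficient is a rising factorial divided by a Gamma value.

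Next I would expand the left-hand side using the series in the definition of $e_{\alpha,\alpha}$, specialised to the scalar argument $\lambda$: the $(I-B)$ factor becomes $(1-\lambda)$, which after distributing and performing a single index shift in the second piece collapses the two sums into one via a telescoping cancellation. I would then compare the resulting $n$-th coefficient with what was produced in the previous step and verify, using the definition $t^{\overline{\tau}}=\Gamma(t+\tau)/\Gamma(t)$, that the Gamma factors match exactly; a quick check at $n=0$ pins down the normalisation.

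The main obstacle I anticipate is bookkeeping rather than a conceptual hurdle: the $(1-\lambda)$ factor in the definition of $e_{\alpha,\alpha}$ mixes neighbouring terms of its series, so one must be careful with the index shift and with the edge term at $n=0$ to confirm that the telescoping is exact. A secondary subtlety is convergence; once the coefficients agree, the identity follows on the common domain of absolute convergence of both series, which is guaranteed here by the spectral-radius condition inherited from the definitions. No additional machinery beyond Definition \ref{def:rising} and the series definitions of $F_{\alpha,\alpha}$ and $e_{\alpha,\alpha}$ is needed.
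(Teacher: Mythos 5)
The paper does not actually prove this lemma: it is quoted, with a citation, from Acar and At{\i}c{\i} \cite{acar2013exponential}, so there is no in-paper argument to compare yours against. Judged on its own terms, your overall strategy (expand both sides in powers of $\lambda$, absorb the prefactor using the rising-factorial addition rule $t^{\overline{\mu+\nu}}=t^{\overline{\mu}}\,(t+\mu)^{\overline{\nu}}$, then match coefficients) is the natural one, and the addition rule you invoke is correct and does convert the right-hand side into a single power series with coefficients $(t+1)^{\overline{\alpha n+\alpha-1}}/\Gamma(\alpha n+1)$.

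There is, however, a genuine gap at the coefficient-matching step. First, the ``telescoping cancellation'' you rely on does not occur: distributing the scalar factor $(1-\lambda)$ over $\sum_{n}a_n\lambda^n$ with $a_n=(t+1)^{\overline{\alpha(n+1)}}$ and shifting the index yields $a_0+\sum_{n\ge 1}(a_n-a_{n-1})\lambda^n$, i.e.\ first differences of the coefficients with respect to the \emph{exponent} of the rising factorial. Nothing collapses, and $(t+1)^{\overline{\alpha(n+1)}}-(t+1)^{\overline{\alpha n}}$ is not a single rising factorial divided by a Gamma value; the identity $\nabla_t\,t^{\overline{\mu}}=\mu\,t^{\overline{\mu-1}}$ that would produce such a simplification applies to differences in $t$, not in $\mu$. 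Second, with the definitions as literally printed in this paper the two sides already disagree at $n=0$: the left side contributes $(t+1)^{\overline{\alpha}}$, the right side $(t+1)^{\overline{\alpha-1}}/\Gamma(1)$. The underlying problem is that the paper's definition of $e_{\alpha,\alpha}$ omits the $\Gamma(\alpha(n+1))$-type normalisation present in the source, and the two-parameter function $F_{\alpha,\alpha}$ appearing in the lemma is never defined here (only the one-parameter $F_\alpha$ is). Your plan can only go through after you first restore the correctly normalised definitions from \cite{acar2013exponential}; carried out with the definitions as given, the computation you describe terminates in a false equality rather than a proof.
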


\begin{lemma}\label{lemma3}
\cite{nechvatal2014asymptotics} Let $0 < \alpha  < 1$ and  $\lambda  \in {S_\alpha }$. Then, the discrete Mittag-Leffler function exhibits the subsequent property\\
\begin{equation}
F_{\alpha } (\lambda, {t_n}) = O({n^{ - \alpha }}) ~~ as ~~  n  \to \infty .
\end{equation}

\end{lemma}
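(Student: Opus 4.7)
The plan is to extract the asymptotic behavior of $F_\alpha(\lambda, t_n)$ from an explicit integral representation obtained via the nabla Laplace transform of Definition~\ref{def:laplace}. First I would establish, by termwise application of the known transform of $t^{\overline{\alpha k}}/\Gamma(\alpha k+1)$ (see \cite{jonnalagadda2018matrix}) and summation of the resulting geometric series in $\lambda s^{-\alpha}$, that
$$\aleph_0[F_\alpha(\lambda,\cdot)](s) \;=\; \frac{s^{\alpha-1}}{s^\alpha-\lambda}.$$
In particular $F_\alpha(\lambda,t_n)$ is realized as the $(n{-}1)$-th Taylor coefficient in the variable $(1-s)$ of a function that is rational in $s^\alpha$.

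Next I would write $F_\alpha(\lambda,t_n)$ as a Cauchy contour integral
$$F_\alpha(\lambda,t_n) \;=\; \frac{1}{2\pi i}\oint \frac{s^{\alpha-1}}{s^\alpha-\lambda}\,(1-s)^{-n}\,ds,$$
with the contour chosen inside the region of analyticity. The integrand has two types of singularities: the branch set of $s^{\alpha}$ emanating from $s=1$, and the algebraic poles determined by $s^{\alpha}=\lambda$. By Lemma~\ref{lemma1} the region $S_\alpha$ is exactly the set of $\lambda$ for which every root of $s^\alpha=\lambda$ sits outside the unit disk, so deforming the contour outward sweeps up no pole contribution (or only exponentially negligible ones).

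Finally I would collapse the deformed contour onto a Hankel-type keyhole around the branch set and apply Watson's lemma. Under the substitution $s = 1 - u/n$, the kernel $(1-s)^{-n}$ becomes $e^{-u}(1+o(1))$ and the local expansion of $s^{\alpha-1}/(s^\alpha-\lambda)$ near $s=1$ produces a factor $u^{\alpha-1}$; integrating against $e^{-u}$ yields a leading term proportional to $n^{-\alpha}\,\Gamma(\alpha)/(1-\lambda)$, together with higher-order remainders. This gives the claimed bound $F_\alpha(\lambda,t_n)=O(n^{-\alpha})$ as $n\to\infty$, and Lemma~\ref{lemma2} can be invoked at the end to translate the estimate between the two normalizations $F_{\alpha,\alpha}$ and $F_\alpha$ of the discrete Mittag-Leffler function if needed.

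The main obstacle I expect is the precise matching of the geometric description of $S_\alpha$ with the "no-pole-crossed" condition of the contour deformation: one must verify, uniformly in $\lambda\in S_\alpha$, that the Hankel contour can be placed so that the arcs away from the branch point contribute lower-order terms, and simultaneously that any root of $s^\alpha=\lambda$ lies strictly in the exterior region where $(1-s)^{-n}$ decays. Once this geometric compatibility is in place, the remaining asymptotic analysis is the standard Watson-lemma computation.
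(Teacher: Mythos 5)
The paper does not prove this lemma at all: it is imported verbatim from \cite{nechvatal2014asymptotics}, so there is no in-paper argument to compare against. Your strategy, however, is exactly the one used in that source: pass to the nabla Laplace transform of Definition~\ref{def:laplace}, obtain $\aleph_0[F_\alpha(\lambda,\cdot)](s)=s^{\alpha-1}/(s^\alpha-\lambda)$ (your termwise computation of this is correct, since $\aleph_0[t^{\overline{\nu}}/\Gamma(\nu+1)](s)=s^{-\nu-1}$), recover $F_\alpha(\lambda,t_n)$ as the $(n-1)$-st coefficient in the variable $w=1-s$, and read off the decay rate from the dominant singularity. The role of $\lambda\in S_\alpha$ is precisely what you say it is, with one correction: the relevant condition is that every root of $s^\alpha=\lambda$ lies outside the disk $|1-s|\le 1$ centered at $s=1$ (equivalently $|w|>1$), not outside the unit disk centered at the origin; the boundary of that disk is $s=2\cos\theta\,e^{i\theta}$, whose image under $s\mapsto s^\alpha$ is exactly the curve $|\lambda|=(2\cos(\arg\lambda/\alpha))^\alpha$ appearing in \eqref{9}.

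The genuine gap is in your local analysis at the end. The branch point of $s^{\alpha-1}/(s^\alpha-\lambda)$ is at $s=0$, not at $s=1$; in the coefficient-extraction variable $w=1-s$ it sits at $w=1$, which is the dominant singularity once the poles have been pushed outside $|w|\le 1$. Consequently the correct rescaling is $s=u/n$ (equivalently $w=1-u/n$), under which the kernel $w^{-n}=(1-u/n)^{-n}$ tends to $e^{u}$ on the Hankel contour and the local factor $s^{\alpha-1}/(s^\alpha-\lambda)\approx -(u/n)^{\alpha-1}/\lambda$ supplies the power $n^{1-\alpha}$ that, combined with $ds=du/n$, yields $n^{-\alpha}$. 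Your substitution $s=1-u/n$ places the blow-up at $w=u/n$, where $(1-s)^{-n}=(u/n)^{-n}$ diverges rather than tending to $e^{-u}$, so that step fails as written; it also leads you to the wrong leading constant (the correct one is proportional to $-1/(\lambda\,\Gamma(1-\alpha))$, not $\Gamma(\alpha)/(1-\lambda)$, though this does not affect the $O(n^{-\alpha})$ claim). With the branch point relocated to $s=0$ and the rescaling fixed, the rest of your outline (contour deformation justified by $\lambda\in S_\alpha$, Watson/Hankel evaluation, and Lemma~\ref{lemma2} to pass between the one- and two-parameter normalizations) goes through and reproduces the cited result.
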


\begin{lemma}\label{lemma4}
\cite{vcermak2021problem}
If any of the following conditions hold, the roots of  $p(\lambda)= \lambda^2 - b \lambda + c$ belong to ${S_\alpha }$ 
\begin{itemize}
\item[i].  $c > {4^\alpha } ~~  and  ~~b < {2^{ -\alpha }}c + 2^{ \alpha },$
\item[ii]. $0 < c \le {4^\alpha }   ~ ~  and  ~~   c < 2{\sqrt c }\cos \left( {\alpha \arccos \left( {\frac{1}{2}{\sqrt {c}}} \right)} \right),$
\item[iii].$c < 0    ~~  and  ~~ b > {2^{ -\alpha }}c + 2^{ \alpha },$
\end{itemize}
where  $0 < \alpha  < 1$.
\end{lemma}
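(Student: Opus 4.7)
The plan is to use the quadratic formula to write $\lambda_\pm = (b \pm \sqrt{b^2-4c})/2$ and then verify membership in $S_\alpha$ case by case using the characterization from Lemma~\ref{lemma1}. First I would extract three simple rules from $S_\alpha$: every negative real number automatically belongs to $S_\alpha$ since its argument is $\pi > \alpha\pi/2$; a positive real $\lambda$ belongs to $S_\alpha$ iff $\lambda > 2^\alpha$; and a non-real $\lambda = re^{i\theta}$ with $|\theta|\le \alpha\pi/2$ belongs to $S_\alpha$ iff $r > (2\cos(\theta/\alpha))^\alpha$, while $|\theta| > \alpha\pi/2$ is sufficient on its own. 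The proof then splits on the sign of the discriminant $\Delta = b^2 - 4c$.

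Condition (iii) is the cleanest: $c < 0$ forces $\Delta > 0$ and, by Vieta, one root is positive and the other negative. The negative root is automatically in $S_\alpha$, so only $\lambda_+ > 2^\alpha$ remains. Writing this as $\sqrt{b^2-4c} > 2^{\alpha+1}-b$ and squaring (handling the trivial sub-case $b \ge 2^{\alpha+1}$ first) reduces, after straightforward simplification, to exactly $b > 2^{-\alpha}c + 2^\alpha$. Condition (i) uses a parallel strategy: for complex roots the bound $|\lambda| = \sqrt c > 2^\alpha \ge (2\cos(\theta/\alpha))^\alpha$ is immediate and independent of $b$; for real roots the same-sign constraint from $c>0$ makes the $b \le 0$ sub-case trivial, and $b^2 \ge 4c > 4^{\alpha+1}$ forces $b > 2^{\alpha+1}$ in the $b>0$ sub-case, so the identical square-and-rearrange manipulation converts $\lambda_- > 2^\alpha$ into exactly $b < 2^{-\alpha}c + 2^\alpha$.

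Condition (ii) is the delicate case. Here $0 < c \le 4^\alpha$ gives $|\lambda| = \sqrt c \le 2^\alpha$ in the complex-root regime, so the simple radius bound used in (i) fails and one must check the curved boundary $|\lambda| = (2\cos(\theta/\alpha))^\alpha$ of $S_\alpha$ directly. Substituting $\cos\theta = b/(2\sqrt c)$ into $\sqrt c > (2\cos(\theta/\alpha))^\alpha$, inverting through $\arccos$, and using the monotonicity of $\cos$ on $[0,\pi/2]$ to reintroduce $\cos(\alpha \arccos(\cdot))$ produces the stated trigonometric inequality. Care is needed to confirm that the argument of $\arccos$ lies in $[-1,1]$, which follows from the complex-root hypothesis $b^2 < 4c$, and that $\theta/\alpha$ stays within the monotonicity range $[0,\pi/2]$.

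The main obstacle will be case (ii): unlike (i) and (iii) it is not purely algebraic and requires parametrizing the boundary arc of $S_\alpha$ and certifying that $\sqrt c\, e^{\pm i\theta}$ lies on its exterior side. Once the parametrization is set up carefully and the real-versus-complex sub-cases under the hypotheses of (ii) are distinguished, the remaining work amounts to direct substitution combined with the monotonicity of $\cos$.
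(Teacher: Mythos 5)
The paper does not actually prove this lemma: it is imported with a citation to Čermák--Nechvátal and then used as a black box in Theorems~\ref{th:2D_stability} and~\ref{th:stability_nSub}, so there is no in-paper argument to compare against. What you propose is essentially the standard derivation from the cited source: write the roots via the quadratic formula, reduce membership in $S_\alpha$ to the three rules (negative reals are automatic, a positive real needs $\lambda>2^\alpha$, a complex $re^{i\theta}$ with $|\theta|\le\alpha\pi/2$ needs $r>(2\cos(\theta/\alpha))^\alpha$), and split on the sign of the discriminant. Your handling of (i) and (iii) checks out completely: the sign analysis via Vieta, the separate treatment of the sub-case $b\ge 2^{\alpha+1}$ before squaring, and the square-and-rearrange step do reproduce the linear inequality $b\lessgtr 2^{-\alpha}c+2^{\alpha}$, and in (i) the radius bound $\sqrt{c}>2^{\alpha}\ge(2\cos(\theta/\alpha))^{\alpha}$ disposes of the complex-root regime independently of $b$.

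The one point you must not gloss over is that in case (ii) your derivation does \emph{not} ``produce the stated trigonometric inequality.'' Substituting $\cos\theta=b/(2\sqrt{c})$ into $\sqrt{c}>(2\cos(\theta/\alpha))^{\alpha}$ and inverting through $\arccos$ yields
$$b<2\sqrt{c}\,\cos\Bigl(\alpha\arccos\bigl(\tfrac{1}{2}c^{\frac{1}{2\alpha}}\bigr)\Bigr),$$
whereas the lemma as printed reads $c<2\sqrt{c}\cos(\alpha\arccos(\frac{1}{2}\sqrt{c}))$, an inequality that contains no $b$ at all and therefore cannot control roots that depend on $b$ (for fixed admissible $c$ and large $b>0$ the smaller positive real root drops below $2^{\alpha}$ and leaves $S_\alpha$ while the printed condition is unaffected). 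The printed version is a typo; the form your derivation gives is exactly the one the paper itself uses later in Example~I, where the condition appears as $-c_1<2\sqrt{c_0}\cos(\alpha\cos^{-1}\frac{c_0^{1/(2\alpha)}}{2})$ with $b=-c_1$. You should state this correction explicitly rather than claim agreement with the lemma as written. Finally, close the real-root sub-case of (ii) that you only flag as ``delicate'': under $0<c\le 4^{\alpha}$, real roots with $b>0$ force $b\ge 2\sqrt{c}\ge 2\sqrt{c}\cos(\cdot)$, so the corrected inequality excludes them, while $b<0$ gives two negative roots that lie in $S_\alpha$ automatically. With those two repairs your outline is a complete and correct proof.
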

\begin{thm}
The fractional order system
\begin{align}\label{eq:nD_theory}
{}_a^C{\nabla ^\alpha }X(t) = B\textsc{x}(t) + \mathbb{H}\left[\textsc{x}(t) \right],
\end{align}
exhibit asymptotic stability in the vicinity of its equilibrium point  $\textsc{x}^\star$, given the conditions
$${\lambda _i} \in {S_\alpha } ~~ and ~~\mathop {\lim }\limits_{\textsc{x} \to \textsc{x}^\star} \frac{||~ {\mathbb{H}\left[\textsc{x}(t)\right]}~ ||}{{\left\| \textsc{x}(t)-\textsc{x}^\star(t) \right\|}} = 0,$$ where   $\mathbb{H}\left[\textsc{x}(t) \right]$ is a  nonlinear term,  $\lambda_i$ are the eigenvalues of matrix $B \in {R^{n \times n}}$ and $\textsc{x}(t) \in {R^{n \times 1}}$. \end{thm}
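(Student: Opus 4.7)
The plan is to reduce the problem to a linear variation-of-parameters representation plus a perturbation estimate. First I would translate the equilibrium to the origin by setting $z(t) = \textsc{x}(t) - \textsc{x}^\star$, so that the system becomes
$${}_a^C{\nabla ^\alpha }z(t) = B\,z(t) + \widetilde{\mathbb{H}}[z(t)],$$
where $\widetilde{\mathbb{H}}[z(t)] := \mathbb{H}[z(t) + \textsc{x}^\star]$ inherits the sublinearity hypothesis: for every $\varepsilon > 0$ there exists $\delta > 0$ with $\|\widetilde{\mathbb{H}}[z]\| \le \varepsilon \|z\|$ whenever $\|z\| < \delta$. The problem is then to show $z(t) \to 0$ for sufficiently small initial data.

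Next I would apply the nabla-Laplace transform $\aleph_a$ from Definition~\ref{def:laplace} to the translated equation and invert term by term. Using the discrete convolution theorem together with the fact that the resolvent of $B$ corresponds on the time side to the matrix Mittag-Leffler functions $F_\alpha(B,\cdot)$ and $e_{\alpha,\alpha}(B,\cdot)$, the solution is expressed by a discrete variation-of-parameters formula of the schematic form
$$z(t) \;=\; F_\alpha\!\bigl(B,(t-a)^{\overline{\alpha}}\bigr)\,z(a) \;+\; \sum_{s=a+1}^{t} e_{\alpha,\alpha}\!\bigl(B,(t-s)^{\overline{\alpha}}\bigr)\,\widetilde{\mathbb{H}}[z(s)].$$
Because every eigenvalue $\lambda_i$ of $B$ lies in $S_\alpha$, Lemma~\ref{lemma3} gives $\|F_\alpha(B,t_n)\| = O(n^{-\alpha})$, so the homogeneous part decays; Lemma~\ref{lemma2} converts the kernel $e_{\alpha,\alpha}$ into $F_{\alpha,\alpha}$ multiplied by a rising factorial, which supplies a compatible decay/summability estimate for the convolution kernel.

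The remaining step is to close the loop on the nonlinear term. Restricting attention to an initial condition with $\|z(a)\|$ small enough that $\|z(t)\|$ stays below the $\delta$ from sublinearity (for a suitable $\varepsilon$), I would bound $\|\widetilde{\mathbb{H}}[z(s)]\| \le \varepsilon \|z(s)\|$ inside the convolution and then apply a discrete fractional Gronwall-type inequality to the scalar inequality
$$\|z(t)\| \le \|F_\alpha(B,\cdot)\|\,\|z(a)\| + \varepsilon \sum_{s=a+1}^{t} \|e_{\alpha,\alpha}(B,(t-s)^{\overline{\alpha}})\|\,\|z(s)\|.$$
Choosing $\varepsilon$ smaller than the reciprocal of the kernel's summed norm allows the perturbation to be absorbed into the homogeneous decay, yielding $\|z(t)\| = O(t^{-\alpha}) \to 0$ and therefore asymptotic stability of $\textsc{x}^\star$.

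The main obstacle I anticipate is precisely this last Gronwall step. Unlike the continuous case, the nabla Mittag-Leffler kernel decays only algebraically (as supplied by Lemmas~\ref{lemma2} and~\ref{lemma3}), so one must verify that the convolution $\sum_s \|e_{\alpha,\alpha}(B,(t-s)^{\overline{\alpha}})\| \cdot s^{-\alpha}$ still behaves like $t^{-\alpha}$; this requires a careful splitting of the sum near $s \approx a$ and $s \approx t$ and a uniform bound on $\|e_{\alpha,\alpha}(B,\cdot)\|$ that exploits the spectral condition $\lambda_i \in S_\alpha$ (not merely $|\lambda_i|<1$). Once this kernel estimate is in place, the sublinearity assumption on $\mathbb{H}$ turns it into a contraction on a small ball around $\textsc{x}^\star$ and the conclusion follows by a standard a priori bootstrap argument.
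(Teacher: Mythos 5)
Your proposal follows essentially the same route as the paper: the discrete variation-of-parameters representation of the solution via the matrix Mittag-Leffler functions $F_\alpha$ and $e_{\alpha,\alpha}$, the $\varepsilon$--$\delta$ reformulation of the sublinearity of $\mathbb{H}$, and the algebraic kernel decay supplied by Lemmas \ref{lemma2} and \ref{lemma3}. The one point worth noting is that the ``main obstacle'' you anticipate --- the unknown $\|z(s)\|$ sitting inside the convolution, requiring a discrete Gronwall or bootstrap argument to close --- is exactly the step the paper handles less carefully, by bounding $\|\textsc{x}(t)-\textsc{x}^\star\|$ inside the sum by $\delta_0$ and absorbing it into the constant $M$ without verifying that the trajectory remains in that ball; your plan to make this rigorous is sound and would strengthen the argument.
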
\label{th:nD_stability}

\begin{proof}
According to the method proposed in \cite{jonnalagadda2018matrix}, the unique solution of (\ref{eq:nD_theory}) is expressed as\\
$$\textsc{x}(t) = {F_\alpha }(B,{t^{\overline \alpha  }})\textsc{x}({t_0}) + \sum\limits_{s = 1}^t {e_{\alpha,\alpha}}(B,{{(t - s)}^{\overline \alpha  }})    {(I - B)^{ - 1}}H(\textsc{x}(t)).$$\\
Suppose ~$\textsc{x}^\star$ is the solution of system \eqref{eq:nD_theory}, Hence For every $\varepsilon>0$, there exists $\delta_0 >0$ such that if
$\left\| \textsc{x}(t)-\textsc{x}^\star \right\|< \delta_0 $ then $||~ {\mathbb{H}\left[\textsc{x}(t)\right]}~ || < \varepsilon \left\| \textsc{x}(t)-\textsc{x}^\star \right\| $. Thus, we  have
$$\left\| \textsc{x}(t)-\textsc{x}^\star \right\| \le \left\| {{F_\alpha }(B,{t^{\overline \alpha  }})} \right\| \left\| {\textsc{x}({t_0})} \right\|+ \sum\limits_{s = 1}^t \varepsilon {\left\| {\mathop {{e_{\alpha ,\alpha }}(B,{{(t - s)}^{\overline \alpha  }}}  } \right\|} {\left\| {I - B} \right\|^{ - 1}}{{\left\| \textsc{x}(t)-\textsc{x}^\star \right\|}}.$$\\
Given that $\delta$ is chosen arbitrarily such that $0 < \delta  < {\delta _0}$, and considering  lemma \ref{lemma2} we can derive
$$\left\| \textsc{x}(t)-\textsc{x}^\star  \right\| \le \delta \left\|  {{F_\alpha }(B,{t^{\overline \alpha  }})} \right\| +  M \sum\limits_{s = 1}^t {{{(t + 1)}^{\overline {\alpha  - 1} }}\left\| {{F_\alpha }(B,{{(t + \alpha  - s)}^{\overline \alpha  }}} \right\|},$$
\\where ~$M= \varepsilon \delta_0 {{(1 + \left\| B \right\|)^{-1}}} $,  regarding
${(t + 1)^{\overline {\alpha  - 1} }} = \frac{{{t^{\overline \alpha  }}}}{{t + 1}}$ proves that
$ \frac{{{t^{\overline \alpha  }}}}{{t + 1}} = O({t^{\alpha  - 1}})$,
 by lemma \ref{lemma3} we  have
$$\left\| \textsc{x}(t)-\textsc{x}^\star \right\| \le {\ell _1}{t^{ - \alpha }} + {\ell _2}{t^{\alpha  - 1}}\sum\limits_{s = 1}^t {{{(t)}^{ - 1 - \alpha }}} ,$$
$$\left\| \textsc{x}(t)-\textsc{x}^\star \right\| \le {\ell _1}{t^{ - \alpha }} + {\ell _2}{\ell _3}{t^{ - 1}},$$
$$\left\| \textsc{x}(t)-\textsc{x}^\star \right\| \to 0 ~~~  as ~~~ t\to \infty.$$
Asymptotic stability of the solution of system  (\ref{eq:nD_theory}) is suggested by this result.
\end{proof}
\begin{rem}
Assume $\mathbf{\Gamma}=bcirc(\mathbf{T},\tau,...,\tau)\in {\mathbb{C}^{N \times N}} $then
$$\lambda({\mathbf{\Gamma}})= \left\{  \lambda (\mathbf{T} + (N - 1)\tau),\underbrace {\lambda (\mathbf{T} - \tau),...,\lambda (\mathbf{T} - \tau)}_{N - 1} \right\}.$$
\end{rem}
\subsubsection{Analysis of the \textsc{FDMCRN} Model’s Stability}
 \begin{thm} \label{th:2D_stability}
The system  \eqref{eq:fmhnn_2_N} exhibits asymptotic stability at its equilibrium point,  ${E^ * } = (\sigma ,\sigma  - \frac{a}{{{\sigma ^2} + 1}},\sigma ,\sigma  - \frac{a}{{{\sigma ^2} + 1}},{\gamma _1},{\gamma _2})$ where ${\gamma _1},{\gamma _2}$ are constants representing specific positions on the line of equilibrium points, if and only if, in each case, any of the following conditions holds true\\
\textbf{Case1}:
\begin{itemize}
\item[i].  $c_0 > {4^\alpha },  ~~   2^{\alpha} - 2^{-\alpha} \eta  <  \vartheta  - k {\gamma _1},$
\item[ii]. $0 < c_0 \le {4^\alpha },   ~~    - 2~ {\eta^{\frac{1}{2}}} cos(\frac{\alpha}{2} cos^{-1}({\eta^{{\frac{1}{{2\alpha }}}}}))  <   \vartheta  -  k {\gamma _1},$
\item[iii]. $c_0 < 0,  ~~ 2^\alpha - 2^{-\alpha} \eta >  \vartheta  -  k {\gamma _1},$
\end{itemize}
\textbf{Case2}:
\begin{itemize}
\item[i].  $b_0 > {4^\alpha }, ~~  \vartheta  + k{\gamma _1} + 2k{\gamma _2} <   2^{1-\alpha}(k{\gamma _1}{\gamma _2}^2 - \vartheta k{\gamma _2}) - {2^\alpha },$
\item[ii]. $0 < b_0 \le {4^\alpha }, ~~ \vartheta  + k{\gamma _1} + 2k{\gamma _2} <  2cos[\frac{\alpha}{2} co{s^{ - 1}}[{(2{k^2}{\gamma _1}{\gamma _2} + 2\vartheta k{\gamma _2})^{\frac{1}{{2\alpha }}}})]]{(2{k^2}{\gamma _1}{\gamma _2} + 2\vartheta k{\gamma _2})^{\frac{1}{2}}},$
\item[iii]. $b_0 < 0, ~~ \vartheta  + k{\gamma _1} + 2k{\gamma _2} <   2^{1-\alpha}(k{\gamma _1}{\gamma _2}^2 - \vartheta k{\gamma _2}) - {2^\alpha },$
\end{itemize}
where  $0 < \alpha  \le 1$, $\vartheta=\frac{{k{\gamma _1} - 2a\sigma }}{{{{({\sigma ^2} + 1)}^2}}} - 1$ and  ${b_i},{c_i}, (i = 0 ,1)$   are listed in Table \ref{t1}.
\end{thm}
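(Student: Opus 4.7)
The plan is to linearize system~\eqref{eq:fmhnn_2_N} about the equilibrium $E^{\star}$, apply the preceding abstract stability theorem, and then invoke Lemma~\ref{lemma4} to convert the spectral requirement $\lambda_i\in S_\alpha$ into the explicit parameter inequalities listed in Case~1 and Case~2.

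First I would verify that $E^{\star}$ is indeed an equilibrium by direct substitution into the right-hand side of~\eqref{eq:fmhnn_2_N}. This also clarifies why $\gamma_1,\gamma_2$ remain free: the last two equations force $x_{1,n}=x_{2,n}$ and $y_{1,n}=y_{2,n}$ at equilibrium, which nullifies the flux-update equations for any values of $\varphi_{1,n},\varphi_{2,n}$, so there is a two-parameter family of equilibria parametrised by $(\gamma_1,\gamma_2)$.

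Next I would compute the Jacobian $B$ of the right-hand side at $E^{\star}$. The only strictly nonlinear contribution comes from $a/(1+x_{i,n}^2)$, whose derivative at $x=\sigma$ yields the term $-2a\sigma/(\sigma^2+1)^2$; combined with the bilinear memristive couplings $k\varphi_{j,n}(\cdot)$ evaluated at $(\gamma_1,\gamma_2)$, this produces the parameter $\vartheta$. The nonlinear remainder $\mathbb{H}[\textsc{x}(t)]=F(\textsc{x})-F(E^{\star})-B(\textsc{x}-E^{\star})$ is composed of quadratic-and-higher Taylor terms in $\textsc{x}-E^{\star}$, so $\|\mathbb{H}\|/\|\textsc{x}-E^{\star}\|\to 0$ as $\textsc{x}\to E^{\star}$, verifying the hypothesis of the preceding theorem.

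The decisive step is to exploit the $1\leftrightarrow 2$ exchange symmetry of~\eqref{eq:fmhnn_2_N}. Changing to sum coordinates $(x_1+x_2,\,y_1+y_2)$ and difference coordinates $(x_1-x_2,\,y_1-y_2)$, together with the flux variables, block-diagonalises $B$ into two independent blocks corresponding to the in-phase and anti-phase modes. After removing the two trivial zero eigenvalues arising from the neutral $\varphi_1,\varphi_2$ directions along the equilibrium line, each block yields a quadratic characteristic polynomial $p_i(\lambda)=\lambda^2-b_i\lambda+c_i$, with $(b_0,c_0)$ coming from one block and $(b_1,c_1)$ from the other; these are precisely the entries recorded in Table~\ref{t1} and appearing in the two cases.

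Finally, applying Lemma~\ref{lemma4} to each quadratic translates the condition $\lambda_i\in S_\alpha$ into the three alternative sets of inequalities (i)--(iii), producing Case~1 and Case~2 respectively; combined with the preceding theorem this gives asymptotic stability of the full nonlinear system at $E^{\star}$. The main obstacle will be the algebraic bookkeeping in the symmetry reduction: one must carry the $k\gamma_1$ and $k\gamma_2$ contributions correctly through both blocks, match the resulting $b_i,c_i$ to the expressions in Table~\ref{t1}, and verify that the $\arccos$ and $2\cos(\cdot)$ forms appearing in Case~1(ii) and Case~2(ii) reproduce Lemma~\ref{lemma4}(ii) after substituting the explicit $c_i$.
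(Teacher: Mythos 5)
Your proposal follows essentially the same route as the paper: compute the Jacobian at $E^{\star}$, obtain the characteristic polynomial $\lambda^{2}p_{1}(\lambda)$, split the quartic $p_{1}$ into the two quadratics of Table~\ref{t1b}, and apply Lemma~\ref{lemma4} to each factor. Your sum/difference symmetry reduction gives a cleaner justification of that factorization (which the paper simply asserts), and your explicit check that the nonlinear remainder is $o(\|\textsc{x}-E^{\star}\|)$ supplies a step the paper leaves implicit, but the substance of the argument is the same.
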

\begin{proof} The system \eqref{eq:fmhnn_2_N} has the following Jacobian matrix

\begin{align}\label{eq2}
\left[\begin{array}{*{20}{c}}
{\frac{{k{\varphi _1} - 2a{x_1}}}{{{{({x_1}^2 + 1)}^2}}} - 1}&1&{ - k{\varphi _1}}&0&{k({x_1} - {x_2})}&0\\
{ - \eta }&{k{\varphi _2}}&0&{ - k{\varphi _2}}&{k({y_1} - {y_2})}&0\\
{ - k{\varphi _1}}&1&{\frac{{k{\varphi _1} - 2a{x_1}}}{{{{({x_1}^2 + 1)}^2}}} - 1}&0&{ - k({x_1} - {x_2})}&0\\
0&{ - k{\varphi _2}}&{ - \eta }&{k{\varphi _2}}&{ - k({y_1} - {y_2})}&0\\
1&0&{ - 1}&0&0&0\\
0&1&0&{ - 1}&0&0
\end{array}\right],
\end{align}
which at its equilibrium,  will be
\begin{align}\label{genera}
J(E^\star) =
\left[\begin{array}{*{20}{c}}
{\frac{{k{\gamma _1} - 2a\sigma }}{{{{({\sigma ^2} + 1)}^2}}} - 1}&1&{ - k{\gamma _1}}&0&0&0\\
{ - \eta }&{k{\gamma _1}}&0&{ - k{\gamma _1}}&0&0\\
{ - k{\gamma _1}}&1&{\frac{{k{\gamma _1} - 2a\sigma }}{{{{({\sigma ^2} + 1)}^2}}} - 1}&0&0&0\\
0&{ - k{\lambda _1}}&{ - \eta }&{k{\gamma _1}}&0&0\\
1&0&{ - 1}&0&0&0\\
0&1&0&{ - 1}&0&0
\end{array}\right].
\end{align}
Assuming ${a_i}~ $(i = 0 to 3 ) and  ${b_i},{c_i}$ (i = 0, 1 ) presented in Tables \ref{t1}, \ref{t1b}, the Jacobian matrix's characteristic equation is represented  as $p(\lambda ) = \lambda^2p_1(\lambda)$
where
\begin{align}\label{eq:p1}
p_1(\lambda)={\lambda ^4}
+ a_3{\lambda ^3}  + a_2{\lambda ^2} + a_1\lambda + a_0,
\end{align}
by splitting the quadratic polynomial by factoring it into two quadratics we can rewrite $ p_1(\lambda) $ in the form of
\begin{align}\label{eq:p2}
p_1(\lambda)=({\lambda ^2} +b_1\lambda + b_0)({\lambda ^2}  + c_1\lambda + c_0)=p_2(\lambda) \times p_3(\lambda).
\end{align}
According to lemma \eqref{lemma4} the proof follows directly.\\
\end{proof}


\begin{table}[!ht]
\begin{center}
\caption{The coefficients of  the characteristic polynomial $p_1(\lambda)$ in eq. (\ref{eq:p1}).}
\label{t1}
\begin{tabular}{l|cccc}
\hline
\multirow{2}{0.9cm}{coefficients}
\multirow{2}{0.7cm}
{~~~~~~~~~~~~~~~~~~~~~~~~~~~~~~~~~~~~~~~~~~~~$\textbf{p}_1$($\lambda$)}&
\\
\\
\hline
$a_0$ & $2({\gamma _1}{\gamma _2}\eta{k^2} + {\gamma _2}\eta\nu k)$  \\
$a_1$ & $-(- 2{\gamma _2}{\gamma _1}^2{k^3} + \eta{\gamma _1}k + 2{\gamma _2}k{\nu ^2} +
2{\gamma _2}\eta k + \eta \nu)$ \\
$a_2$ & $- {\gamma _1}^2{k^2} +4{\gamma _2}k\nu + {\nu^2} + \eta$\\
$a_3 $& $-(2\nu + 2{\gamma _2}k) $ \\
\hline
\end{tabular}
\end{center}
\end{table}
\begin{table}[!ht]
\begin{center}
\caption{The coefficients of  the characteristic polynomial $p_2(\lambda)$, $p_3(\lambda)$ in eq. (\ref{eq:p2}).}
\label{t1b}
\begin{tabular}{l|cccc}
\hline
\multirow{2}{0.9cm}{coefficients}
\multirow{2}{0.7cm}{~~~~~~~~~~~~~~~~~$\textbf{p}_2$($\lambda$)}&
\multirow{2}{0.7cm}{~~~~~~~~~~~~~~~~~~~~~~~~~~~~$\textbf{p}_3$($\lambda$)}&
\\
\\
\hline
\rowcolor{gray!20}
$b_0$ &  $ 2{\gamma _1}{\gamma _2}{k^2} + 2{\gamma _2}\vartheta k$ & - \\
\rowcolor{gray!20}
$b_1$ &  $ - \vartheta - {\gamma _1}k - 2{\gamma _2}k$ & - \\
$c_0$ & - &$\eta $  \\
$c_1$ & - &$\gamma _1 k - \vartheta$ \\
\hline
\end{tabular}
\end{center}
\end{table}

\begin{rem}
The roots of $p_i(\lambda), (i=1,2,3)$ must be located in the region ${S_\alpha }$ which is shown in Fig.~\ref{fig3}. Note that the Matignon sector \cite{matignon1998generalized}, which is located left to the dashed half-lines is a subset of ${S_\alpha }$.
\end{rem}
\begin{figure}[!tbp]
 \begin{center}
\includegraphics[width=12cm, height=5cm]{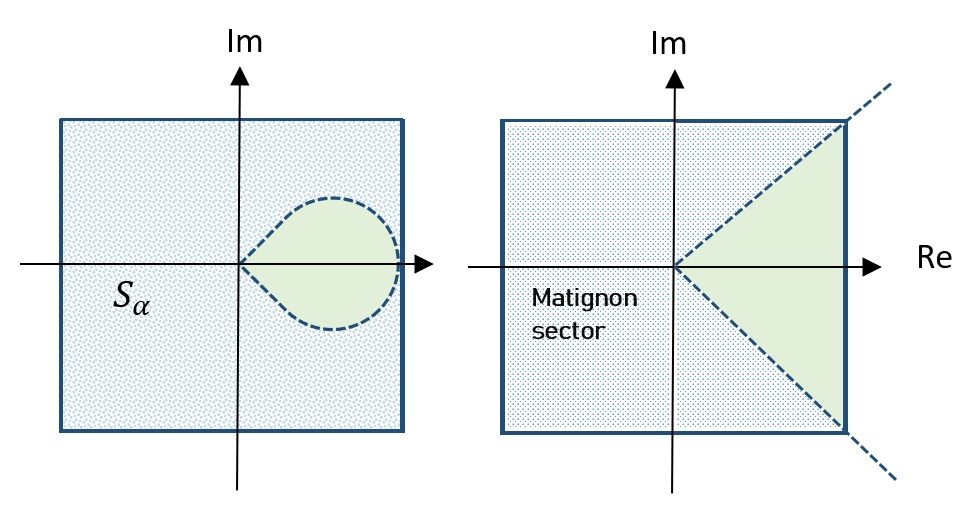}
\end{center}
\caption{Asymptotic stability regions,  ${S_\alpha }$ and Matignon sector.}\label{fig3}
\end{figure}

 \subsubsection{ Analysis of the  \textsc{FDMCRN} Model’s Stability with a Ring-based Structure. }
\begin{thm} \label{th:stability_nSub}
The system  \eqref{ndim} exhibits asymptotic stability at its equilibrium point,  ${E^ * } = (\boldsymbol{\sigma},\boldsymbol{\sigma}^*,\boldsymbol{\sigma} ,\boldsymbol{\sigma}^*,\boldsymbol{\gamma _1},\boldsymbol{\gamma _2})$, if and only if, in each case, any of the following conditions holds true\\
\\
\\
\\
\textbf{Case1}:
\begin{itemize}
\item[i].  $N - 1 > {4^\alpha }, - {\rm{(3D  +  2P  -  DN)/(2P)}} < {2^{ - \alpha }}(N - 1) + {2^\alpha },$
\item[ii]. $0 < N - 1 < {4^\alpha },  - (3D + 2P - DN)/(2P) < \cos (\alpha {\cos ^{ - 1}}\frac{{{{(N - 1)}^{\frac{1}{{2\alpha }}}}}}{2}),$
\end{itemize}
\textbf{Case2}:
\begin{itemize}
\item[i].  $\eta{(N - 1)^2} > {4^\alpha }, - (\eta{(N - 1)^2}) < {2^{ - \alpha }}(\eta{(N - 1)^2}) + {2^\alpha },$
\item[ii]. $ 0 < \eta(N - 1)^2 < {4^\alpha }, - (\eta {(N - 1)^2})/(2P) < \cos (\alpha {\cos ^{ - 1}}\frac{{{{(N - 1)}^{\frac{1}{{2\alpha }}}}}}{2}), $
\end{itemize}
\textbf{Case3}:
\begin{itemize}
\item[i].  $N - 1 > {4^\alpha },{\rm{(3D  +  2P  -  3DN  -  2NP)/(2P)}} < {2^{ - \alpha }}(N - 1) + {2^\alpha },$
\item[ii]. $0 < N - 1 < {4^\alpha }, (3D + 2P - 3DN - 2NP)/(2P) < \cos (\alpha {\cos ^{ - 1}}\frac{{{{(N - 1)}^{\frac{1}{{2\alpha }}}}}}{2}),$
\end{itemize}
where  $ \alpha  \in (0,1)$.
\end{thm}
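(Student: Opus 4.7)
The strategy is to mirror the proof of Theorem~\ref{th:2D_stability} while first collapsing the $6N$-dimensional Jacobian via its block circulant structure. The plan is to linearise \eqref{ndim} at the equilibrium $E^{*}$; because every unit of the ring is coupled identically to its $2P$ nearest neighbours, and the diffusive term $\tfrac{D}{2P}\sum(x_{1j}-x_{1i})$ vanishes at $E^{*}$, the Jacobian has the block circulant form $J(E^{*}) = \mathrm{bcirc}(T,\tau,\tau,\ldots,\tau,0,\ldots,0,\tau,\ldots,\tau)$, where $T\in\mathbb{R}^{6\times 6}$ coincides with the Jacobian of \eqref{eq:fmhnn_2_N} at the two-neuron equilibrium up to a shift of $-D/P$ in the $(1,1)$ entry coming from the self-diffusion, and each $\tau$ holds the single entry $D/(2P)$ in position $(1,1)$.

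The second step invokes the block circulant spectral remark preceding the theorem: the spectrum of $J(E^{*})$ decomposes as the spectrum of $T+(N-1)\tau$ (once) together with the spectrum of $T-\tau$ (with multiplicity $N-1$). Each of these $6\times 6$ blocks contributes two zero eigenvalues from the flux rows, matching the two-parameter manifold of equilibria indexed by $(\gamma_{1},\gamma_{2})$; the remaining four eigenvalues satisfy a quartic which, by the same grouping used in the proof of Theorem~\ref{th:2D_stability}, factors as a product of two quadratics $\lambda^{2}+b_{1}\lambda+b_{0}$ and $\lambda^{2}+c_{1}\lambda+c_{0}$ whose coefficients inherit the shifts $D/P$ or $-D/(2P)$. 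Reading off these coefficients directly yields the expressions $N-1$, $\eta(N-1)^{2}$, $(3D+2P-DN)/(2P)$, and $(3D+2P-3DN-2NP)/(2P)$ that appear in \textbf{Case1}--\textbf{Case3} of the theorem.

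Each resulting quadratic is then fed into Lemma~\ref{lemma4}, whose bullets (i)--(iii) correspond term-by-term to the three subcases per quadratic, placing every nonzero eigenvalue in $S_{\alpha}$. The nonlinear remainder $\mathbb{H}[\textsc{x}]$ stems only from the Rulkov term $a/(1+x^{2})$ and is $O(\|\textsc{x}-E^{*}\|^{2})$ by Taylor expansion, so the limit hypothesis of Theorem~\ref{th:nD_stability} (stated earlier in the section) is satisfied and local asymptotic stability of $E^{*}$ follows at once.

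The main obstacle I anticipate is the bookkeeping in the second step: one must verify that the quartic characteristic polynomials of $T-\tau$ and $T+(N-1)\tau$ really do split into the specific quadratics claimed, with coefficients matching the precise algebraic combinations of $N,P,D,\eta$ stated. A secondary delicate point is that the spectral remark is written for the shape $\mathrm{bcirc}(T,\tau,\ldots,\tau)$ with \emph{every} off-diagonal block equal to $\tau$, whereas the ring here has only $2P$ nonzero off-diagonals; to close this gap one should either restrict attention to the fully coupled case $2P=N-1$, or replace the remark by a direct DFT diagonalisation yielding the eigenvalues of $T+\sum_{p=1}^{P}\tau(\omega^{pm}+\omega^{-pm})$ for $m=0,\ldots,N-1$, and then extract the worst-case bound $m=0$ (giving the $(N-1)\tau$ block) and the remaining modes (collapsing to the $-\tau$ contribution in the theorem's stated form).
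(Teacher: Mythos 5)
Your proposal follows essentially the same route as the paper: linearise at $E^{*}$, exploit the block circulant structure with blocks $\mathbb{A}_0$ (diagonal entry shifted by $-D/P$) and $\mathbb{A}_1$ (single entry $D/(2P)$), reduce the spectrum to that of $\mathbb{A}_0+(N-1)\mathbb{A}_1$ and $\mathbb{A}_0-\mathbb{A}_1$, split each resulting quartic characteristic factor into two quadratics whose coefficients are the stated combinations of $N,P,D,\eta$, and feed each quadratic into Lemma~\ref{lemma4}, with the nonlinear Rulkov term handled by verifying $\lim_{X\to X^{*}}\|\mathbb{H}(X)\|/\|X-X^{*}\|=0$. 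The discrepancy you flag --- that the spectral remark assumes \emph{every} off-diagonal block equals $\tau$ while the ring couples only $2P$ neighbours --- is a genuine subtlety, but the paper does not resolve it either: it simply takes $\mathbb{A}=\mathrm{bcirc}(\mathbb{A}_0,\mathbb{A}_1,\ldots,\mathbb{A}_1)$, so your proposed DFT-based repair goes beyond, rather than diverges from, the published argument.
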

\begin{proof}
The compact form of the network \eqref{ndim} is presented as $    \mathbb{F}(X)=\mathbb{A} X + \mathbb{H}(X) $,
matrix $A \in {\mathbb{R}^{N \times N}}$ is a block circulant matrix constructed from the blocks $ A_0,A_1,...,A_1 $, such that  $A_0,A_1 \in {\mathbb{R}^{6 \times 6}}$,
$\mathbb{H}(\textsc{X})=  \Psi (X)+ k \widetilde \Phi \textsc{X}$ $ ~\text{and} ~\widetilde \Phi =bcirc(\Phi ,0,..,0)_{1\times N}$  as

$$\mathbb{A}_0=
\left[\begin{array}{*{20}{c}}
{ - 1-D/P}&1&0&0&0&0\\
{ - \eta }&0&0&0&0&0\\
0&0&{ - 1}&1&0&0\\
0&0&{ - \eta }&0&0&0\\
1&0&{ - 1}&0&0&0\\
0&1&0&{ - 1}&0&0
\end{array}\right],$$ ~~
\\
$$ \mathbb{A}_1=
\left[\begin{array}{*{20}{c}}
{ D/2P}&0&0&0&0&0\\
0&0&0&0&0&0\\
0&0&0&0&0&0\\
0&0&0&0&0&0\\
0&0&0&0&0&0\\
0&0&0&0&0&0
\end{array}\right],$$ ~~
\\
$$\Psi (X_n)=
\left[\begin{array}{*{20}{c}}
{\frac{a}{{1 + x_{1,n}^2}}}\\
{\eta \sigma }\\
{\frac{a}{{1 + x_{2,n}^2}}}\\
{\eta \sigma }\\
0\\
0
\end{array}\right], ~~
\Phi=
\left[\begin{array}{*{20}{c}}
{{\varphi _1}}&0&{ - {\varphi _1}}&0&0&0\\
0&{{\varphi _2}}&0&{ - {\varphi _2}}&0&0\\
{ - {\varphi _1}}&0&{{\varphi _1}}&0&0&0\\
0&{ - {\varphi _2}}&0&{{\varphi _2}}&0&0\\
0&0&0&0&0&0\\
0&0&0&0&0&0
\end{array}\right].$$
\\
Next, we prove that equation \eqref{ndim} meets the stability condition at its equilibrium point.
We have
\begin{equation*}
    \mathop {\lim }\limits_{X \to X^*} \frac{||\mathbb{H}(X)||}{||X -X^\star||}
    =
    \mathop {\lim }\limits_{X \to X^*} \frac{ k\left\| {\widetilde \Phi } \right\|\left\| X \right\|}{|| X -X^\star||}
    = 0,
\end{equation*}
where
\begin{align*}
 p(\lambda) = \det( \mathbb{A}-\lambda I)
    &=\left\{ p_{1}(\lambda) \times p_2({\lambda)} \right\},\\
\end{align*}
\begin{align}\label{eq:3}
{p_1}(\lambda ) =\lambda^2 ({\lambda^4} + a_{3}{\lambda^3} + a_{2}{\lambda^2} +a_{1} \lambda^{} + a_{0}),
\end{align}

\begin{align}\label{eq:4}
{p_2}(\lambda ) =\lambda^2 ( {\lambda^4} + b_{3}{\lambda^3} + b_{2}{\lambda^2} +b_{1} \lambda^{} + b_{0}).
\end{align}
The coefficients of  the characteristic polynomial $p_1(\lambda)$ and $p_2(\lambda)$ are presented in Table \ref {t4} and Table \ref {t5}. By factoring each of the quartic polynomials into two quadratic factors, we can rewrite
\begin{align*}
{p_1}(\lambda ) =\lambda^2({\lambda^2} +s_{1} \lambda^{} + s_{0}) ({\lambda^2} +u_{1} \lambda^{} + u_{0}),\\
{p_2}(\lambda ) = \lambda^2({\lambda^2} +q_{1} \lambda^{} + q_{0}) ({\lambda^2} +v_{1} \lambda^{} + v_{0}),\\&
\end{align*}
where\\
\\
$\left\{ \begin{array}{l}
{s_1} = {\rm{( - 3D  -  2P  + 3DN  +  2NP)/(2P)}},\\
{s_0} = N - 1,\\
{u_1} = \eta{N^2} - 2 N \eta+ \eta,\\
{u_0} = \eta{N^2} - 2 N \eta+ \eta,
\end{array} \right.$
$\left\{ \begin{array}{l}
{q_1} = {\rm{(3D  +  2P  -  DN)/(2P)}},\\
{q_0} = N - 1,\\
{v_1} = \eta{N^2} - 2 N \eta+ \eta,\\
{v_0} = \eta{N^2} - 2 N \eta+\eta.
\end{array} \right.$
\\
\\
\\
By applying Lemma  \eqref{lemma4}, the proof becomes straightforward.
\end{proof}

\begin{table}[!ht]
\begin{center}
\caption{The coefficients of  the characteristic polynomial $p_1(\lambda)$ in eq. (\ref{eq:3}).}
\label{t4}
\begin{tabular}{l|cccc}
\hline
\multirow{2}{0.9cm}{coefficients}
\multirow{2}{0.7cm} {$~~~~~~~~~~~~~~~~~~~~~~~~~~~~~~~~~~~~~~~~~~~~\textbf{p}_1$($\lambda$)}&
\\
\\
\hline
$a_0$ & $(2p {\eta}^2 N^4 - 8p {\eta}^2 N^3 + 12p {\eta}^2 N^2- 8p {\eta}^2 N+ 2p {\eta}^2)/(2p)$  \\
$a_1$ & $- (3d {\eta}  + 4 {\eta} p + 12{\eta} N^2p - 4{\eta} N^3p - 9d{\eta} N - 12{\eta} Np + 9d{\eta} N^2 - 3d{\eta} N^3)/(2p)$ \\
$a_2$ & $(3d + 2p - 6dN + 4{\eta} p - 4Np + 3dN^2 + 2N^2p + 4{\eta} N^2p - 8{\eta} Np)/(2p) $\\
$a_3$ & $- (3d + 4p - 3dN - 4Np)/(2p) $ \\
\hline
\end{tabular}
\end{center}
\end{table}
\begin{table}[!ht]
\begin{center}
\caption{The coefficients of  the characteristic polynomial $p_2(\lambda)$ in eq. (\ref{eq:4}).}
\label{t5}
\begin{tabular}{l|cccc}
\hline
\multirow{2}{0.9cm}{coefficients}
\multirow{2}{0.7cm} {$~~~~~~~~~~~~~~~~~~~~~~~~~~~~~~~~~~~~~~~~~~~~\textbf{p}_2$($\lambda$)}&
\\
\\
\hline
$b_0$ & $(2p{\eta}^2 N^4 - 8p{\eta}^2 N^3+ 12p {\eta}^2 N^2 - 8p{\eta}^2 N + 2p{\eta}^2)/(2p)$  \\
$b_1$ & $(3d{\eta} - 4{\eta} N^2p + 2{\eta} N^3p - 7d{\eta} N + 2{\eta} Np + 5d{\eta} N^2 - d{\eta} N^3)/(2p) $ \\
$b_2$ & $- (3d + 2p - 4dN - 4{\eta}ap - 2Np + dN^2 - 4{\eta} N^2p + 8{\eta} Np)/(2p $\\
$b_3$ & $(3d - dN + 2Np)/(2p)  $ \\
\hline
\end{tabular}
\end{center}
\end{table}


\section{Numerical Simulation}
This section presents two numerical examples to illustrate and validate the theoretical stability results. The numerical simulations presented in this section were conducted using MATLAB.

\subsubsection*{\textbf{Particular Scenario: Example I}}
Similar to the approach used by Caixia Wang \textit{et al}.~\cite{wang2015stability}, we set $a =3,~ k=0.05,~ \mu=0.001~ and  ~\sigma =-1.5$ in system (\ref{eq:fmhnn_2_N}). Consequently, the  equation  \eqref{eq:p1} simplifies to
\begin{align}\label{eq:p1_lambda2}
p_1(\lambda)=({\lambda ^2} +b_1\lambda + b_0)({\lambda ^2}  + c_1\lambda + c_0).
\end{align}
Through assumption of $~b_1=\frac{{{\rm{[25 -  (185}}{\gamma _1}{\rm{k)]}}}}{{169}}{\rm{ -  2}}{\gamma _2}{k},~ b_0=2{\gamma _2}[\frac{k}{{169}}(16{\gamma _1}k - 25) + {\gamma _1}{k^2}]$, $c_1 =\frac{{{\rm{ [153}}{\gamma _1}{\rm{k +  25]}}}}{{169}}
~ and ~ c_0= 0.001 $ in Theorem (\ref{th:2D_stability}),  the stability criteria of model (\ref{eq:fmhnn_2_N}) are satisfied  if and only if
\begin{align*}
0 < {c_0} \le {4^\alpha },~~ - {c_1} < 2\sqrt {{c_0}} \cos (\alpha {\cos ^{ - 1}}\frac{{{{({c_0})}^{\frac{1}{{2\alpha }}}}}}{2}),
\end{align*}
and provided that any of these qualifications is met\\
\begin{itemize}
\item[i]. ${b_0} > {4^\alpha },~~{b_1} >  - {2^{ - \alpha }}{b_0} - {2^\alpha },$
\item[ii]. $ 0 < {b_0} \le {4^\alpha }, ~~- {b_1} < 2\sqrt {{b_0}} {\cos }(\alpha \cos^{ - 1} (\frac{{{{({b_0})}^{\frac{1}{{2\alpha }}}}}}{2})),$
\item[iii]. ${b_0} < 0,~~{b_1} \prec  - {2^{ - \alpha }}{b_0} - {2^\alpha }.$
\end{itemize}
The above said stability conditions can be expressed in terms of upper and lower bounds of $\gamma _1$ and $\gamma _2$ values. Then,  the stability criteria of model (\ref{eq:fmhnn_2_N}) is met  if and only if,\\
\begin{equation}\label{eq-gamma1}
 {\gamma _1} > \frac{{169}}{{153k}}( - \frac{{25}}{{169}} - 2\sqrt {0.001} \cos (\alpha {\cos ^{ - 1}}\frac{{{{(0.001)}^{\frac{1}{{2\alpha }}}}}}{2})),
\end{equation}
and provided that any of these qualifications is met
\begin{itemize}
\item[1]. $
2k{\gamma _2}{R_1} > {4^\alpha }, ~
{k\gamma _2}[2- R_1 {2^{ - \alpha  + 1}}]<  {2^\alpha } -{R_1},
$
\item[2]. $
0<2k{\gamma _2}{R_1} < {4^\alpha },~
{2^{ - \alpha }}k{\gamma _2} + {\alpha ^2}{(\frac{\pi }{2} - \frac{{{{(2k{\gamma _2}{R_1})}^{\frac{1}{{2\alpha }}}}}}{2})^2}<1- R_1 {2^{-\alpha  - 1}},
$
\item[3]. $
2k{\gamma _2}{R_1}< 0, ~
  {2^\alpha } - {R_1}  < {k\gamma _2}[{2-R_12^{ - \alpha  + 1}}],
$
\end {itemize}
where $ R_0 =   - \frac{{25}}{{169}} - 2\sqrt {0.001} \cos (\alpha {\cos ^{ - 1}}\frac{{{{(0.001)}^{\frac{1}{{2\alpha }}}}}}{2})$~ and ~ $ R_1= \frac{185}{{153}}{R_0} - \frac{{25}}{{169}} .$
\begin{table}
\begin{center}
\begin{tabular}{ |c||c|c|c| }
\hline
\multicolumn{4}{|c|}{Stability conditions based on upper and lower boundary of $\gamma_2$} \\
\hline
Fractional order & Condition (1) & Condition (2) &  Condition (3) \\
\hline
$\alpha=0.99$ & $\gamma _2<$ -119.9058  &  -4106.6 $ <$ $\gamma_2 <$  -4.8162 & $\gamma _2>$0\\
\hline
\end{tabular}
\caption{Stability region of two neuron  model \ref{eq:fmhnn_2_N} of  Example I .}\label{t3}
\end{center}
 \end{table}\\
 \\
The model exhibits a line equilibrium point, which is represented as
$$E=(\sigma, \sigma -\frac{a}{{{\sigma ^2} + 1}}, \sigma, \sigma -\frac{a}{{{\sigma ^2} + 1}}, \gamma_1, \gamma_2), $$
where $\gamma_1~ and ~ \gamma_2$  are constants that indicate specific positions along the line of equilibrium points. \\
Using the defined boundaries for  $\gamma_1$  and $\gamma_2$ in  Eq.~(\ref{eq-gamma1}) and Table \ref{t3}, the stability properties of the map are thoroughly examined and discussed across three distinct cases, highlighting the varying behaviors under which stability is maintained or disrupted.\\
\textbf{Case 1}; for $\alpha=0.99$,  with  $\gamma_1= -1$ and  $\gamma_2= -130$ , condition 1 in Table \ref{t3} is satisfied.  The eigenvalues of the  Jacobian matrix can then be determined in the set, $\Lambda$= $\{ -13,  -0.2027,  0,  0,  -0.0109,  -0.0918\}.$
Due to negativity of the eigenvalues, all the eigenvalues are located in $S_\alpha$ and the system is stable,  Fig.~\ref{fig4}.
\begin{figure}[!ht]
\centering
\includegraphics[width=14cm, height=8cm]{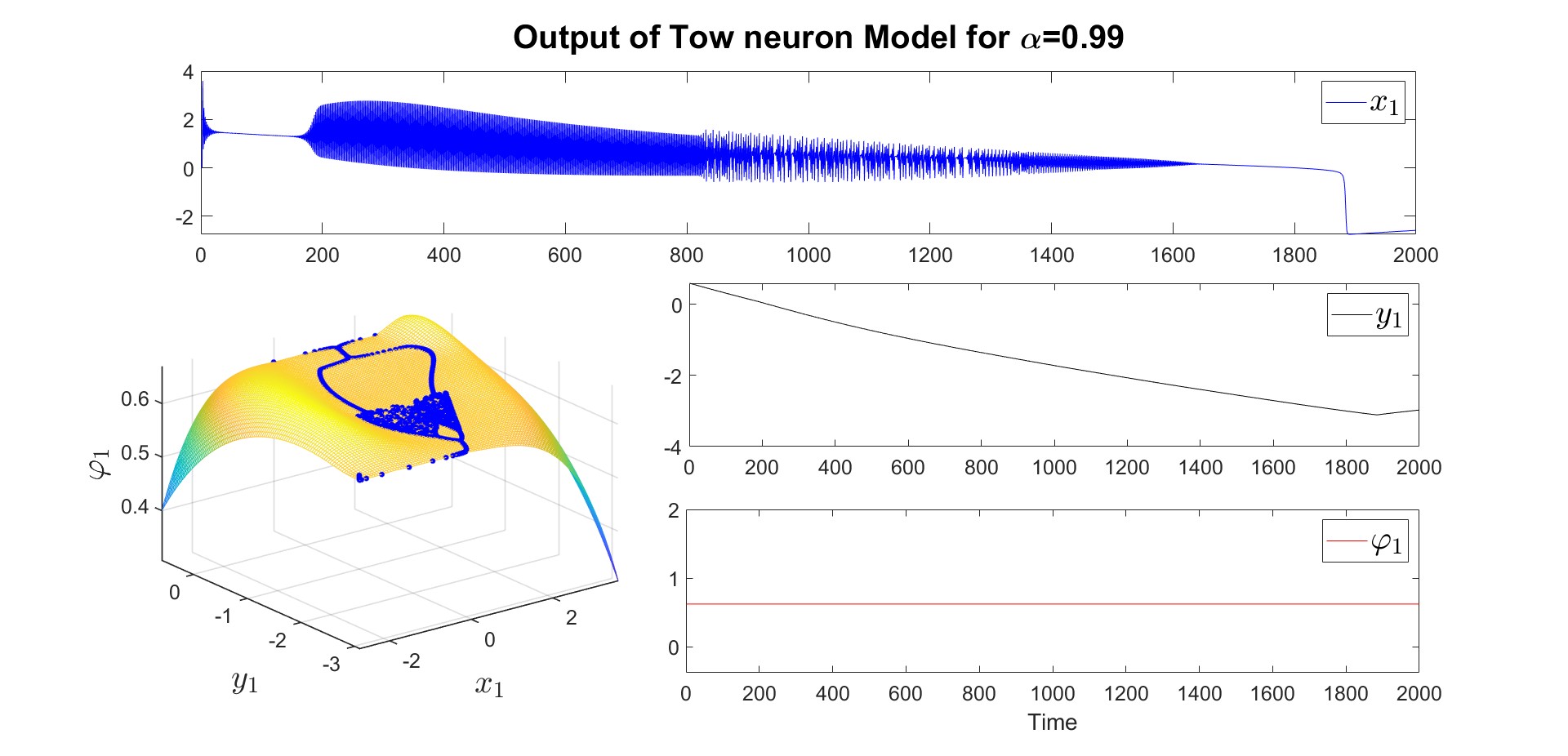}
\caption{The behavior over time of \textsc{FDMCRN} model (\ref{eq:fmhnn_2_N}) in Case 1. } \label{fig4}
\end{figure}
\\
\textbf{Case 2};  for $\alpha=0.99$,  with $\gamma_1= -2 $ and  $\gamma_2=  -45$, condition 2 in Table \ref{t3} is satisfied.  The eigenvalues of the  Jacobian matrix can subsequently be calculated in the  set $\Lambda$=$\{ -4.5,  -0.2574,  0,  0,  -0.0287 - 0.0133i,  -0.0287+ 0.0133i \}.$
All the eigenvalues have negative real part which approve the stability of the model based on their location in $S_\alpha$,   Fig.~\ref{fig5}.
\begin{figure}[!ht]
\centering
\includegraphics[width=14cm, height=8cm]{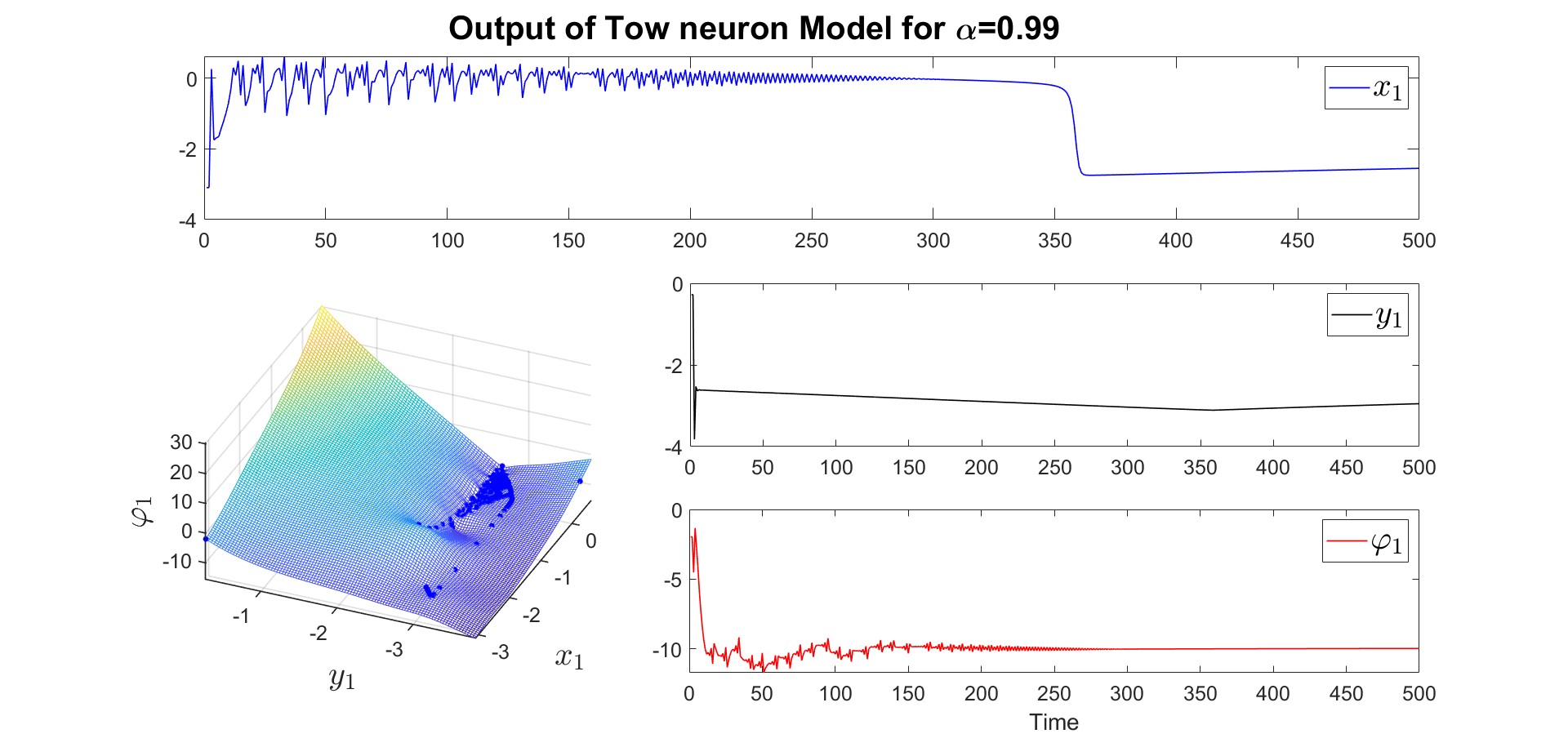}
\caption{The behavior over time of  \textsc{FDMCRN} model (\ref{eq:fmhnn_2_N})  in Case 2.} \label{fig5}
\end{figure}
\\
\textbf{Case 3};  for $\alpha=0.99$,  setting $\gamma_1= -3$  and  $\gamma_2=  50$, condition 3 in Table \ref{t3} is satisfied.The eigenvalues of  the Jacobian matrix  are in the  set, $\Lambda$=$\{-0.3121,-0.2574, 0, 0,\\ -0.0061 - 0.0310i,  -0.0061 + 0.0310i \}.$ All the eigenvalues are in $S_\alpha$ and stability is approved, Fig.~\ref{fig6}.
\begin{figure}[!ht]
\centering
\includegraphics[width=14cm, height=8cm]{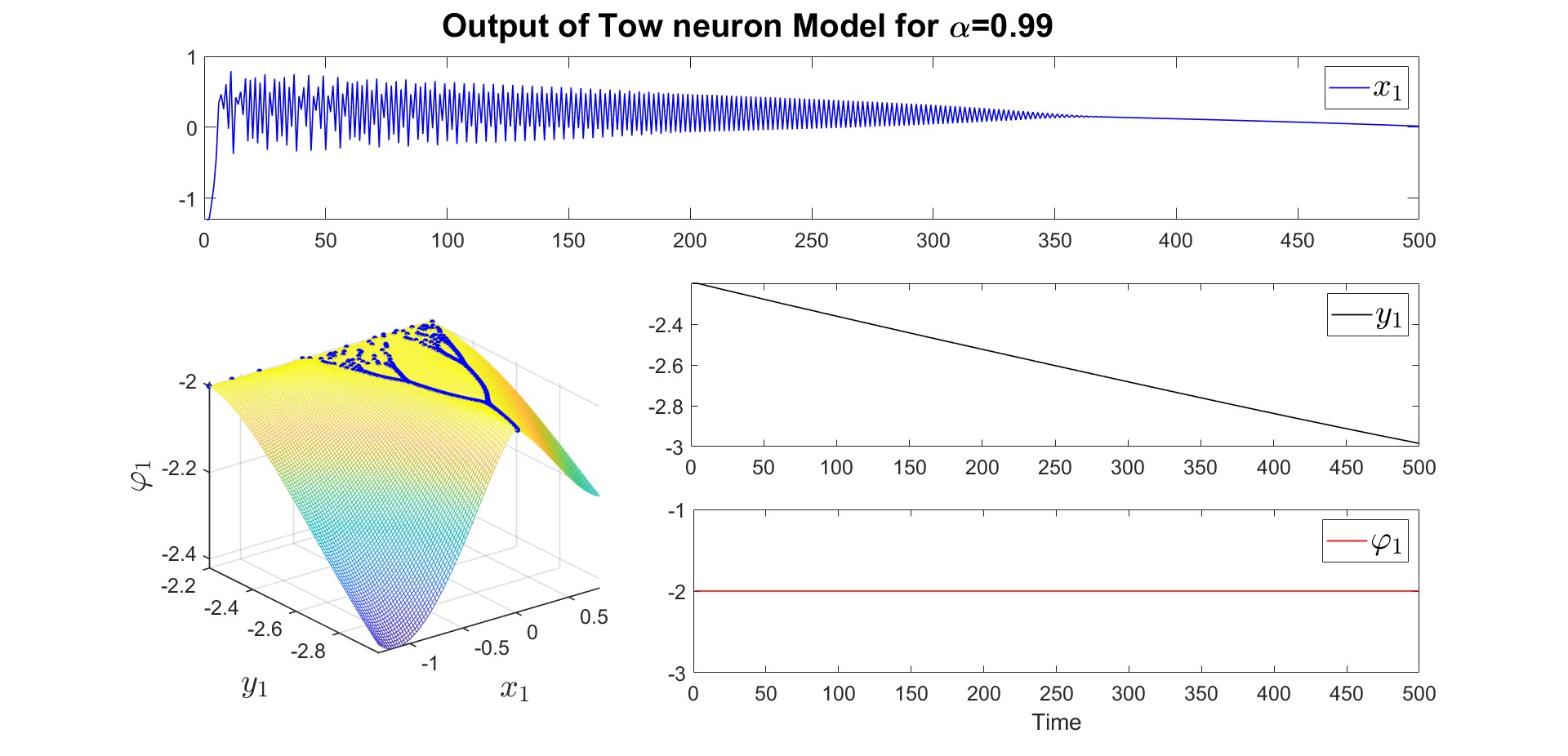}
\caption{The behavior over time of \textsc{FDMCRN} model (\ref{eq:fmhnn_2_N}) in Case 3. } \label{fig6}
\end{figure}
\\
Considering different values of $\alpha = 0.99, 0.89,$ and $0.79$, the time series simulation indicates that the time required for the system to stabilize depends on the derivative's order. As the order of derivation approaches one, the system reaches stability more quickly (Fig.~\ref{fig7}).\\
Specifically, when $\alpha=1$, Lemma~\ref{lemma1a} indicates that the eigenvalues associated with system (\ref{eq:fmhnn_2_N}) lie outside the unit disk, implying system instability, as illustrated in Fig.~\ref{fig8}.\\
\\
\\
\\
\\
\begin{figure}[!ht]
\centering
\includegraphics[width=14cm, height=8cm]{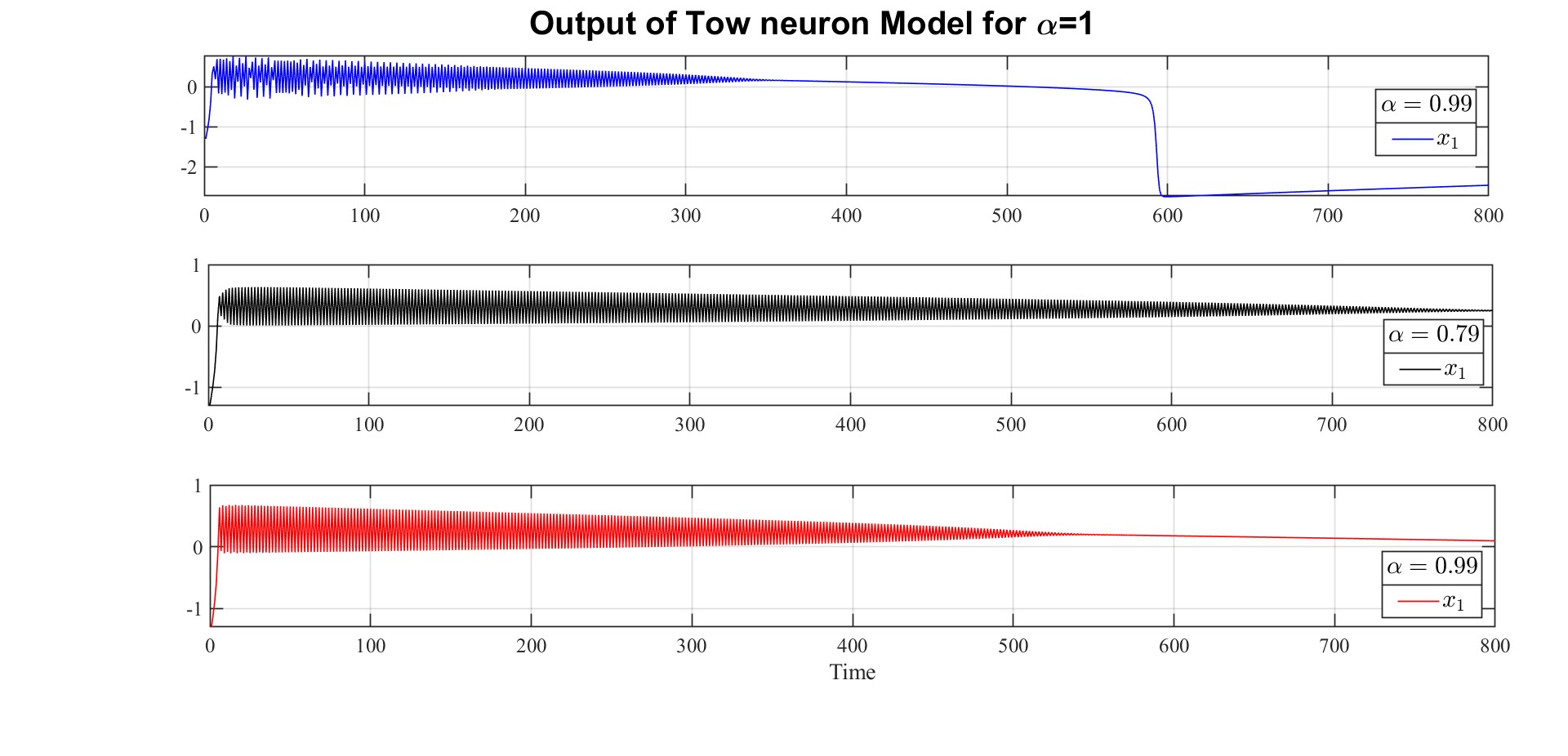}
\caption{The behavior over time of \textsc{FDMCRN} model (\ref{eq:fmhnn_2_N}) by  three different fractional orders. } \label{fig7}
\end{figure}

\begin{figure}[!ht]
\centering
\includegraphics[width=14cm, height=8cm]{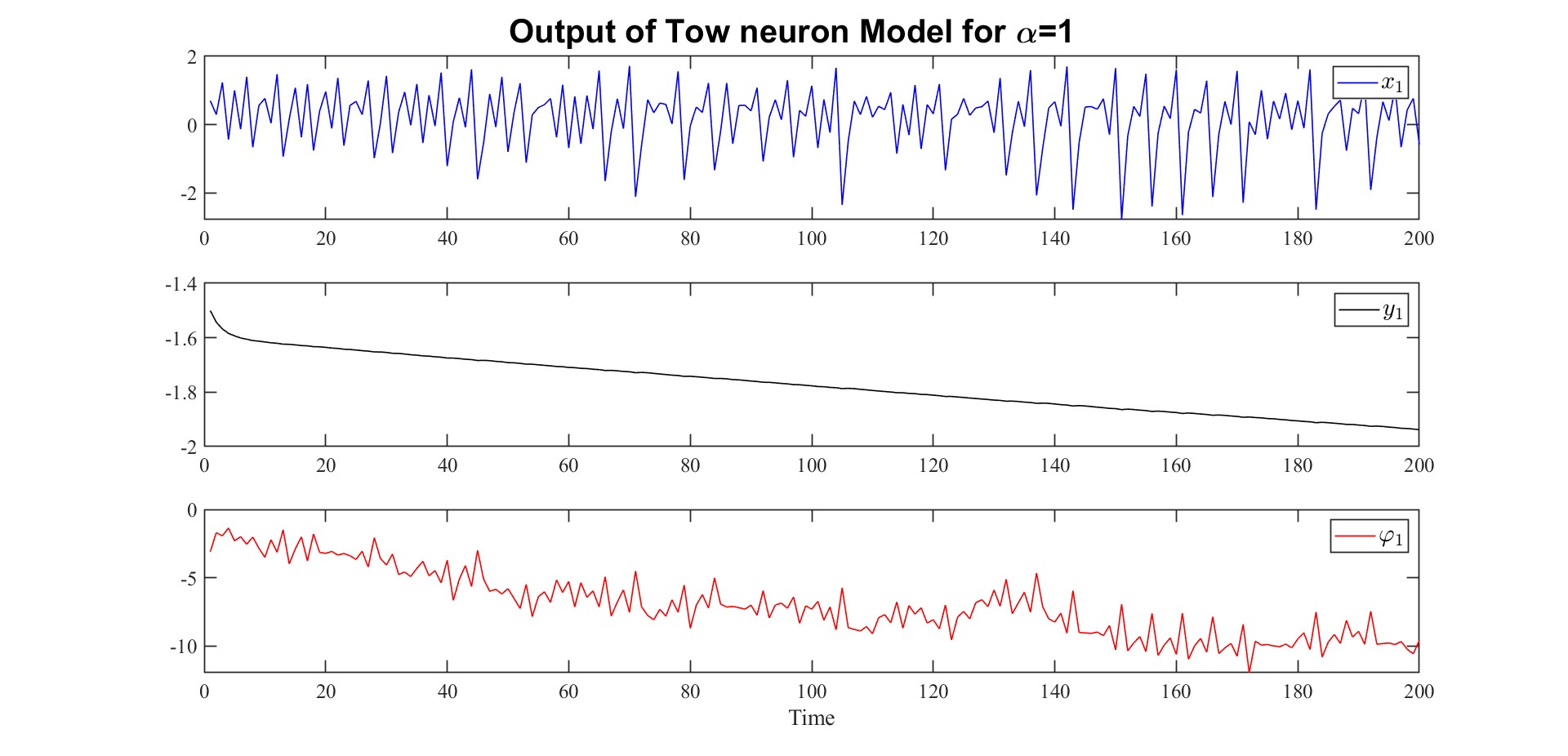}
\caption{The behavior over time of \textsc{FDMCRN} model (\ref{eq:fmhnn_2_N}) where $\alpha=1$. } \label{fig8}
\end{figure}

\subsubsection*{\textbf{Particular Scenario: Example II}}
In this section, alongside the parameter assumptions in the model  \eqref{ndim}, we define the parameters  $P$,  $D$, $N$, and $\alpha$.  The parameters are according to the conditions outlined in Theorem (\ref{th:stability_nSub}) and the  corresponding results are shown in
Table \ref{t6}. In the table,  (\cmark,  \xmark)  verify which stability cases are met and which are not. We examine two cases and demonstrate that the stability of the ring-structured model (\ref{ndim}) depends on several factors, including  $N$, $\alpha$, $D$, and $P$.
\begin{table}
\centering
\begin{tabular}{ |c||c|c|c| }
\hline
\multicolumn{4}{|c|}{Stability region for ring network  \eqref{ndim}} \\
\hline
Features & Case 1 & Case 2 &  Case 3 \\
\hline
$\alpha$=0.98, D=0.3, P=2, N=4 & i \cmark ~ ii \xmark  &  i \xmark ~ ii \cmark     & i \cmark ~ ii \xmark \\
\hline
$\alpha$=0.84, D=0.1, P=1, N=3 & i  \xmark ~ ii \cmark   & i \xmark ~ ii \cmark   & i \xmark ~ ii \cmark \\
\hline
\end{tabular}
\caption{Stability region of network model \eqref{ndim} of  Example II.}\label{t6}
 \end{table}
\\
For example, with the parameters $(P, D, N, \alpha) = (2, 0.3, 4, 0.98) $, we demonstrated that the model is asymptotically stable. This is verified by numerically solving the model, as shown in  Fig.~\ref{fig9} , where the chosen parameters satisfy the asymptotic stability conditions outlined in Theorem (\ref{th:stability_nSub}). We also conducted an additional simulation using the parameters $(P, D, N, \alpha) = (1, 0.1, 1, 0.84) $, as shown in  Fig.~\ref{fig10}. In this case as well, the model's parameter settings also satisfy the asymptotic stability conditions specified in Theorem (\ref{th:stability_nSub}).\\
\\
The time series analysis, for $P=2$, $D=0.1$, $N=2$, shown in Fig.~\ref{fig11}, reveals that  the $\alpha$ parameter significantly affects the time required for the system to reach stability. As $\alpha$ changes, the time taken for the system to stabilize also varies, highlighting the crucial role of this parameter in the system's dynamic behavior. As $\alpha$ decreases, the time required for the system to reach stability becomes longer, as shown in Fig.~\ref{fig11}.This highlights the inverse relationship between $\alpha$ and the stabilization time.

\begin{figure}[!ht]
\centering
\includegraphics[width=15cm, height=7cm]{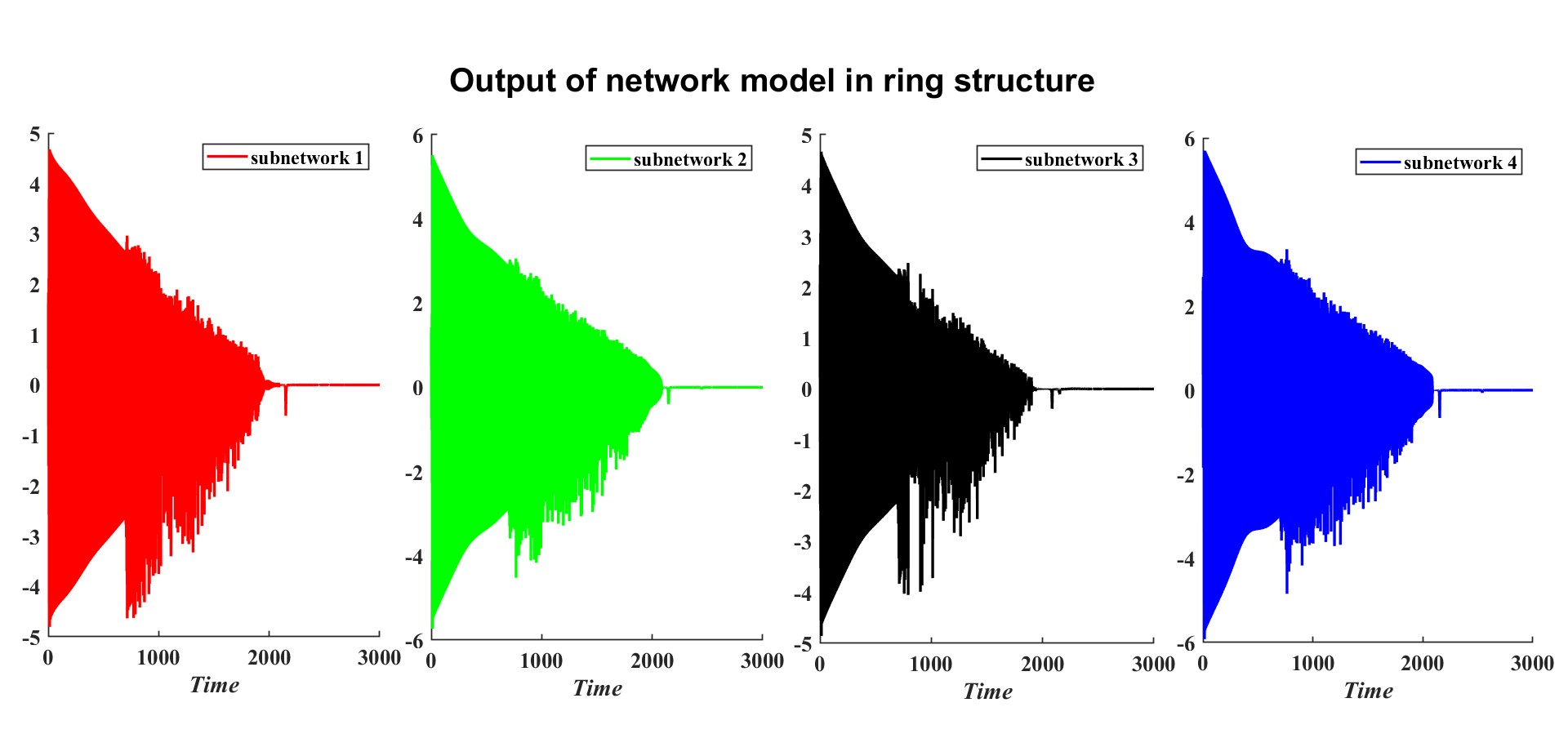}
\caption{The behavior over time of the model  (\ref{ndim}), respect to $X_1$,  with the first feature in Table \ref{t6}. } \label{fig9}
\end{figure}

\begin{figure}[!ht]
\centering
\includegraphics[width=16cm, height=7cm]{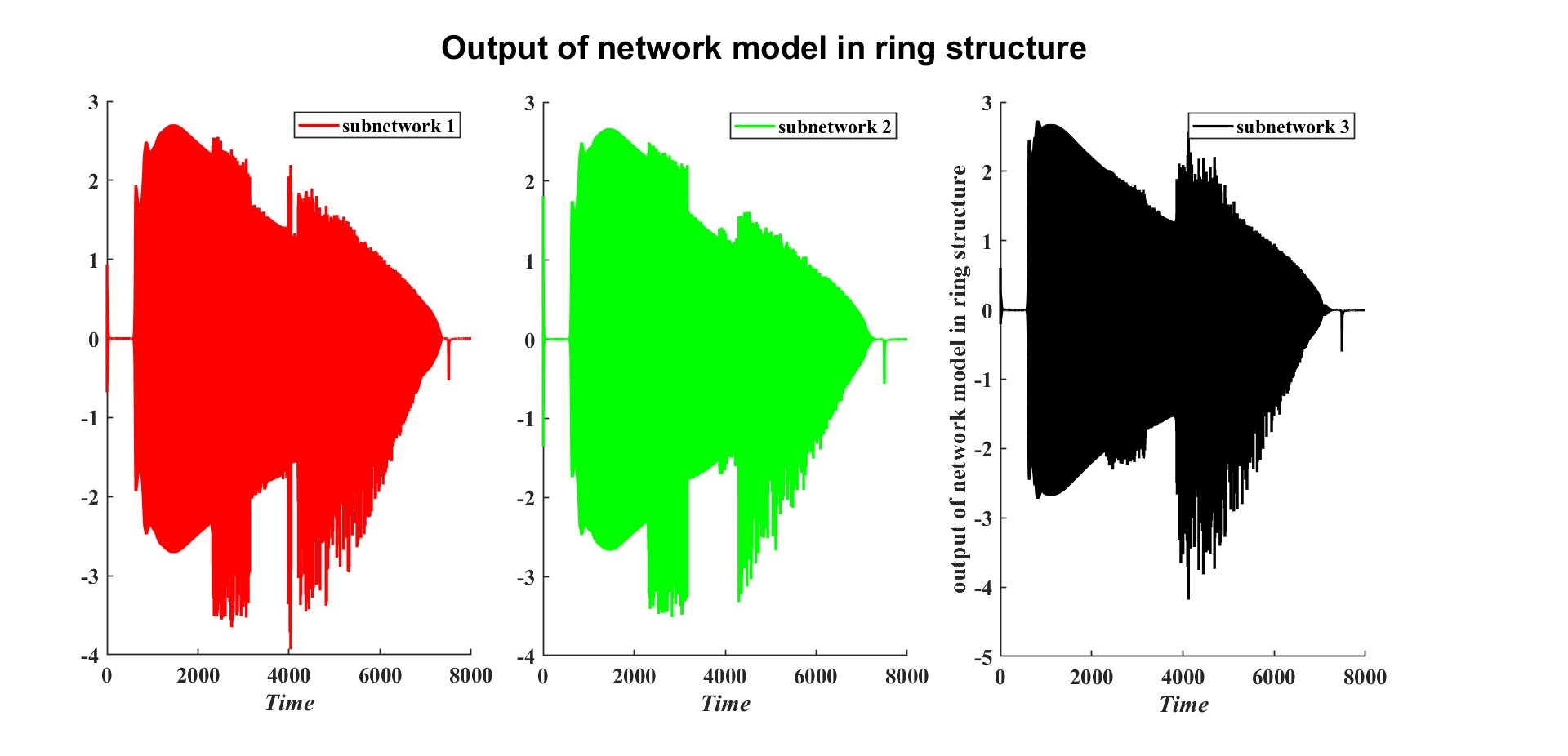}
\caption{The behavior over time of the model  (\ref{ndim}), respect to $X_1$,  with the second  feature in Table \ref{t6}.} \label{fig10}
\end{figure}

\begin{figure}[!ht]
\centering
\includegraphics[width=16cm, height=8cm]{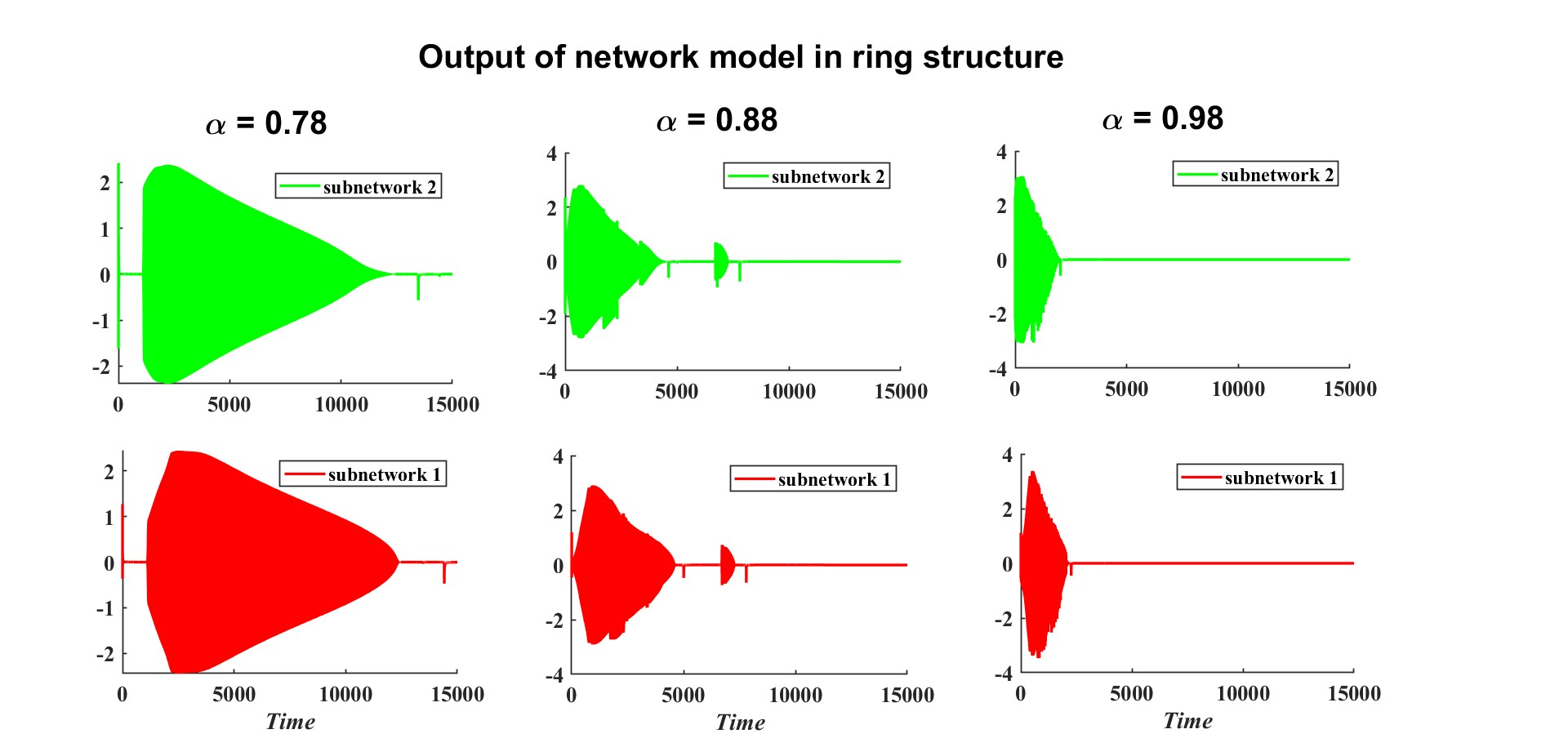}
\caption{The behavior over time of the model  (\ref{ndim}), respect to $X_1$,  with   $\alpha = 0.78$, $\alpha = 0.88$, $\alpha = 0.98.$ } \label{fig11}
\end{figure}

\section{Discussion and conclusion}
In neural network coupling channels, the electromagnetically induced current can be accurately represented using a flux-controlled memristor synapse, effectively modeling coupling dynamics. On the other hand, fractional-order models based on discrete maps are better suited for capturing transient effects such as ion pumping between cells, as fractional-order derivatives effectively describe memory effects in processing. Therefore, integrating memristor synapses with fractional-order discrete neuron models can more precisely estimate the biophysical dynamics of neural networks. This paper initially proposes a discrete fractional-order Rulkov neuron map utilizing Caputo’s nabla definition, involving two neurons coupled via two memristors. 
Subsequently, a novel stability theorem for nonlinear discrete fractional-order differential equations based on Caputo's nabla concept is established theoretically. Utilizing this theorem, equilibrium points of the proposed Rulkov neuron model are systematically analyzed for stability, determining the model's stability region. The boundaries of this stability region are defined according to the limits of the system's equilibrium points. We examined numerical methods in certain cases and validated the results.

Additionally, a network of neurons arranged in a ring structure is presented, and its stability region is similarly examined using the established theorem. The stability region boundaries are influenced by factors such as the fractional order, number of units, coupling strength, and number of neighbors in each subnetwork. It is demonstrated that adjusting the derivation order in both the initial model and the ring structure affects the time required to reach stability; specifically, lower derivation orders extend the time needed to achieve stability.


\end{document}